\documentclass[envcountset]{llncs}

\pagestyle{plain}

\usepackage{amsmath, amssymb, stmaryrd}
\usepackage{comment,booktabs,enumerate}
\usepackage{paralist}

\usepackage{graphicx}
\usepackage{mathtools}
\usepackage{tikz}
\usetikzlibrary{arrows}
\usepackage{caption}
\usepackage{float}

\usepackage{algorithm}
\usepackage[noend]{algpseudocode}
\usepackage{amsmath, amssymb, stmaryrd}
\usepackage{xspace}
\usepackage{xcolor}
\usepackage{needspace}
\usepackage{enumerate}
\usepackage{ifthen}
\usepackage{fancyhdr}
\usepackage{algorithmicx}
  \usetikzlibrary{arrows,automata}
\usepackage{listings}
  
  \lstdefinelanguage{pseudo}{
    morekeywords={if,elseif,then,return,end,choose,guess,when,for,foreach,case},
    morekeywords=[3]{false,true,and,or,not},
    morecomment=[l]{//}
  }
  \lstset{ 
    mathescape,
    language=pseudo,
    basicstyle=\small,
    keywordstyle=\bfseries,
    keywordstyle=[3]\ttfamily,
    texcl,
    commentstyle=\footnotesize\itshape
}


\newcommand{\logicFont}[1]{\protect\ensuremath{\mathrm{#1}}
}
\newcommand{\problemFont}[1]{\protect\ensuremath{\mathsf{#1}}
}
\newcommand{\cmdFont}[1]{\protect\ensuremath{\mathrm{#1}}\xspace}
\newcommand{\classFont}[1]{\protect\ensuremath{\mathsf{#1}}\xspace}
\newcommand{\clneg}{\mathord{\sim}}


\newcommand{\PD}{\logicFont{PD}}
\newcommand{\PL}{\logicFont{PL}}

\newcommand{\PLT}{\logicFont{PL}[\clneg]}

\newcommand{\PLIT}{\logicFont{PL}[\bot_{\rm c},\clneg]}
\newcommand{\PLI}{\logicFont{PL}[\bot_{\rm c}]}
\newcommand{\PLIncT}{\logicFont{PL}[\sub,\clneg]}
\newcommand{\PLInc}{\logicFont{PL}[\sub]}

\newcommand{\PLIncIT}{\logicFont{PL}[\bot_{\rm c},\sub,\clneg]}

\newcommand{\SAT}{\problemFont{SAT}}

\newcommand{\TQBF}{\problemFont{TQBF}}
\newcommand{\MC}{\problemFont{MC}}

\newcommand{\VAL}{\problemFont{VAL}}

\newcommand{\NEXPTIME}{\classFont{NEXPTIME}}
\newcommand{\EXPTIME}{\classFont{EXPTIME}}
\newcommand{\EXP}{\classFont{EXP}}

\newcommand{\PSPACE}{\classFont{PSPACE}}

\newcommand{\NP}{\classFont{NP}}
\newcommand{\NC}{\classFont{NC}}
\newcommand{\PTIME}{\classFont{P}}

\newcommand{\AEP}{\classFont{AEXPTIME(poly)}}
\newcommand{\AEPoly}{\classFont{APTIME}}
\newcommand{\coNP}{\classFont{coNP}}
\newcommand{\coNEXP}{\classFont{co}\classFont{NEXPTIME}}
\newcommand{\siglevel}[1]{\Sigma^{\EXP}_{#1}}
\newcommand{\pilevel}[1]{\Pi^{\EXP}_{#1}}

\newcommand{\dep}[1]{\cmdFont{dep}\!\left(#1\right)}
\newcommand{\fdep}[1]{\cmdFont{=}\!\left(#1\right)}


\newcommand{\ddfn}{\mathrel{\mathop{{\mathop:}{\mathop:}}}=}

\newcommand{\set}[3][]{\protect\ensuremath{\left\{#2\;\middle|\;\ifthenelse{\equal{#1}{}}{\text{#3}}{\parbox{#1}{#3}}\right\}}}

\newcommand{\var}[1]{\textrm{Var}(#1)}
\newcommand{\al}{\alpha}


\newcommand{\tuple}[1]{\vec{#1}}
\newcommand {\indep}[3] {#2 ~\bot_{#1}~ #3}
\newcommand {\indepc}[2] {#1 ~\bot~ #2}

\newcommand{\sub}{\subseteq}
\newcommand{\ma}[1]{\cmdFont{max}\!\left(#1\right)}

\newcommand{\intimp}{\multimap}
\newcommand{\cvee}{\hspace{.5mm}\varovee\hspace{.5mm}}
\newcommand{\cwedge}{\otimes}

\newcommand{\cor}[1]{\mathrm{bin}(#1)}

\begin{document}

\markboth{M. Hannula et al.}{Complexity of Propositional  Logics in Team Semantics}

\title{Complexity of Propositional Logics in Team Semantics}

\author{Miika Hannula\inst{1} \and Juha Kontinen\inst{1}  \and Jonni Virtema\inst{2} \and Heribert Vollmer \inst{2} }

\institute{University of Helsinki, Department of Mathematics and Statistics, Helsinki, Finland \email{\{miika.hannula,juha.kontinen\}@helsinki.fi} 
\and 
{Leibniz Universit\"at Hannover, Institut f\"ur Theoretische Informatik, Germany \texttt{jonni.virtema@gmail.com}, \texttt{vollmer@thi.uni-hannover.de}}}
\maketitle
\begin{abstract}
We  classify the computational  complexity of the satisfiability, validity and model-checking problems  for propositional independence, inclusion, and team logic.  Our main result shows that the satisfiability and validity problems for propositional team logic are complete for alternating exponential-time with polynomially many alternations.
\end{abstract}

\begin{keywords}
Propositional logic, team semantics, dependence, independence, inclusion, satisfiability, validity, model-checking
\end{keywords}

\section{Introduction}

Dependence logic \cite{vaananen07} is a new logical framework for formalising and studying  various notions of dependence and independence that are important in many scientific disciplines such as experimental physics, social choice theory, computer science, and cryptography.  Dependence logic extends first-order logic by dependence atoms
\begin{equation}
\dep{x_1,\dots,x_n,y}
\end{equation}
expressing that the value of the variable $y$ is functionally determined on the values of $x_1,\dots,x_n$. 
Satisfaction for formulas of dependence logic is defined using sets of assignments (\emph{teams}) and not in terms of  single assignments as in first-order logic. 
Whereas dependence logic studies the notion of functional dependence, independence and inclusion logic (introduced in  \cite{DBLP:journals/sLogica/GradelV13} and
\cite{galliani12}, respectively) formalize the concepts of independence and inclusion.
Independence logic (inclusion logic) is obtained from dependence logic by
replacing dependence atoms by the so-called independence atoms $ \indep{\tuple y}{\tuple x}{\tuple z}$ (inclusion atoms $\tuple x \sub \tuple y$).
The intuitive meaning of the independence atom is that the variables of the tuples $\tuple x$ and $\tuple z$ are independent of each other for any fixed value of the  variables in $\tuple y$, whereas the inclusion atom declares that all values of the tuple $\tuple x$ appear also as values of $\tuple y$. In database theory these atoms correspond to the so-called embedded multivalued dependencies and inclusion dependencies (see, e.g.,   \cite{DBLP:conf/foiks/HannulaK14}). Independence atoms have also a close connection to conditional independence in statistics. 

The topic of this article is propositional team semantics which has received relatively little attention so far. On the other hand, modal team semantics has been studied actively. Since the propositional logics studied in the article are fragments of  the corresponding modal logics, some upper bounds trivally transfer to the propositional setting.
The study of propositional team semantics as a subject of independent interest was initiated after  surprising connections between  propositional team semantics and the so-called \emph{inquisitive  semantics}  was discovered (see \cite{Yangthesis} for  details). The first systematic study on the expressive power of propositional dependence logic and many of its variants is due to  \cite{Yangthesis,DBLP:journals/corr/YangV14}. In the same works  natural deduction type inference systems for these logics are also developed, whereas in  \cite{DBLP:journals/corr/SanoV14}  a complete Hilbert-style axiomatization and a labeled tableaux calculus for propositional dependence logic is presented. Very recently Hilbert-style proof systems for related logics that incorporate the classical negation have been introduced by L\"uck, see \cite{Luck16a}.

  The computational aspects of (first-order) dependence logic and its variants have been actively studied, and are now quite well understood (see \cite{Durand2016}). On the other hand, the complexity of the propositional versions of these logics  have not been systematically studied. The study was initiated in \cite{DBLP:journals/corr/Virtema14} where the validity problem of propositional dependence logic was shown to be $\NEXPTIME$-complete. Also recently propositional inclusion logic has been studied in the article \cite{hellakmv15} and in the manuscript \cite{hkmv16}. In this article we study the complexity of satisfiability, validity and model-checking of propositional independence, inclusion and team  logic that extends propositional logic by  the classical negation. The classical negation has turned out to be  a very powerful connective in the settings of  first-order and modal team semantics, see e.g., \cite{DBLP:journals/corr/KontinenMSV14a} and \cite{DBLP:journals/fuin/KontinenN11}. Our results (see Table \ref{newresults}) show that the same is true in the propositional setting. In particular, our main result shows that the satisfiability and validity problems of team logic are complete for alternating exponential time with polynomially many alternations ($\AEP$). The results hold also for the extensions of propositional inclusion and independence logic by the classical negation.  Recently levels of the exponential hierarchy have been logically characterized in the context of propositional team semantics, in \cite{Luck16b,hakoluvi16}.

\begin{table}[!t]
\caption{Overview of the results (completeness results if not stated otherwise)}
\begin{center}
{
\begin{tabular}{cccc}\toprule
	 & $\SAT$ & $\VAL$ & $\MC$  \\\midrule
$\PLI$ 	&  $\NP$   & in $\coNEXP^{\NP}$ & $\NP$ \\
$\PLInc$ & $\EXPTIME$ \cite{hellakmv15} & $\coNP$ & $\PTIME$ \cite{hkmv16} \\
$\PLT,\PLIncIT$ & $\AEP$ & $\AEP$ & $\PSPACE$ \cite{muller14}\\\bottomrule
\end{tabular}
}
\end{center}
\label{newresults}
\end{table}

\section{Preliminaries}

In this section we define the basic concepts and results relevant to team-based propositional logics. We assume that the reader is familiar with propositional logic. 

\subsection{Syntax and semantics}
Let $D$ be a finite, possibly empty, set of proposition symbols. A function $s:D\to \{0,1\}$ is called an \emph{assignment}. A set $X$ of assignments $s:D\to \{0,1\}$ is called a \emph{team}. The set $D$ is the \emph{domain} of $X$. 
We denote by $2^D$ the set of \emph{all assignments} $s:D\to \{0,1\}$.

Let $\Phi$ be a set of proposition symbols. The syntax for propositional logic $\PL(\Phi)$ is defined as follows.
\[
\varphi \ddfn p\mid \neg p \mid (\varphi \wedge \varphi) \mid (\varphi \vee \varphi), \quad\text{ where $p\in\Phi$}.
\]
We write $\var{\varphi}$ for the set  of all proposition symbols that appear in $\varphi$. We denote by $\models_{\PL}$ the ordinary satisfaction relation of propositional logic defined via assignments in the  standard way. Next we give team semantics for propositional logic. 

\begin{definition}
Let $\Phi$ be a set of  proposition symbols and let $X$ be a team. The satisfaction relation $X\models \varphi$ is defined as follows. 
\begin{align*}
X\models p  \quad\Leftrightarrow\quad& \forall s\in X: s(p)=1.\\
X\models \neg p \quad\Leftrightarrow\quad& \forall s\in X: s(p)=0.\\
X\models (\varphi\land\psi) \quad\Leftrightarrow\quad& X\models\varphi \text{ and } X\models\psi.\\
X\models (\varphi\lor\psi) \quad\Leftrightarrow\quad& Y\models\varphi \text{ and } 
Z\models\psi,
\text{ for some $Y,Z$ such that $Y\cup Z= X$}.
\end{align*}
\end{definition}
Note that in team semantics $\neg$ is not the classical negation (denoted by $\clneg$ in this article) but a so-called \emph{dual} negation that does not satisfy the law of excluded middle. 
Next proposition shows that the team semantics and the ordinary semantics for propositional logic defined via assignments coincide.
\begin{proposition}[\cite{vaananen07}]
\label{PLflat}
Let $\varphi$ be a formula of propositional logic and let $X$ be a propositional team. Then 
\(
X\models \varphi \;\text{ iff }\; \forall s\in X: s\models_\mathit{PL} \varphi.
\)
\end{proposition}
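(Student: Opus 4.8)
The plan is to prove the biconditional by structural induction on $\varphi$. It is important to carry the full ``iff'' as the induction hypothesis rather than just one implication, because in the inductive steps the hypothesis will need to be applied in both directions (reading team satisfaction of a subformula as the assignmentwise condition, and conversely). Since $\PL(\Phi)$ is built from literals $p,\neg p$ via $\wedge$ and $\vee$ with dual negation pushed to the atoms, there are exactly four cases to treat.

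First I would dispatch the base cases: for $\varphi = p$, the team clause $X\models p \Leftrightarrow \forall s\in X\colon s(p)=1$ is by definition literally the statement $\forall s\in X\colon s\models_{\PL} p$, and the case $\varphi=\neg p$ is symmetric. The conjunction case is then routine: unfolding $X\models(\psi\wedge\theta)$ as $X\models\psi$ and $X\models\theta$, applying the induction hypothesis to each conjunct, and commuting the two universal quantifiers over $s\in X$ with the conjunction yields $\forall s\in X\colon s\models_{\PL}(\psi\wedge\theta)$. No splitting of the team is involved here, so this step carries over with only bookkeeping.

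The crux is the disjunction case $\varphi=(\psi\vee\theta)$, where team semantics quantifies over a \emph{split} $Y\cup Z=X$ while the classical semantics quantifies pointwise. For the forward direction I would take a witnessing split $Y\cup Z=X$ with $Y\models\psi$ and $Z\models\theta$, apply the induction hypothesis to obtain $s\models_{\PL}\psi$ for all $s\in Y$ and $s\models_{\PL}\theta$ for all $s\in Z$, and then observe that an arbitrary $s\in X=Y\cup Z$ lies in $Y$ or in $Z$, hence satisfies $\psi\vee\theta$ classically. For the backward direction, assuming $\forall s\in X\colon s\models_{\PL}(\psi\vee\theta)$, the natural move is to define the canonical split
\[
Y\dfn\{\,s\in X\mid s\models_{\PL}\psi\,\},\qquad Z\dfn\{\,s\in X\mid s\models_{\PL}\theta\,\},
\]
so that $Y\cup Z=X$ precisely because each $s$ satisfies at least one disjunct, and then use the induction hypothesis in the reverse direction to conclude $Y\models\psi$ and $Z\models\theta$, giving $X\models(\psi\vee\theta)$.

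I expect the backward direction of the disjunction case to be the only genuine obstacle, and it is the place where the ``iff'' strength of the induction hypothesis is essential: one must exhibit an \emph{explicit} witnessing split from purely pointwise information, and the assignment of each $s$ to the disjunct it classically satisfies is exactly what makes this work. A minor point worth noting is the edge case $X=\emptyset$ (and empty subteams), where both sides hold vacuously; the canonical split still satisfies $Y\cup Z=X$, so no separate argument is needed.
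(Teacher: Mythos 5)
Your proof is correct and is exactly the standard argument: the paper itself gives no proof of Proposition~\ref{PLflat} (it is quoted from \cite{vaananen07}), and the proof in that reference is the same structural induction, with the canonical split $Y=\{s\in X\mid s\models_{\PL}\psi\}$, $Z=\{s\in X\mid s\models_{\PL}\theta\}$ handling the disjunction case. Your remarks on needing the full biconditional as induction hypothesis and on the empty-team edge case are both accurate.
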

The syntax of \emph{propositional dependence logic} $\PD(\Phi)$ is obtained by extending the syntax of $\PL(\Phi)$ by the  rule
\[
\varphi \ddfn \dep{p_1,\dots,p_n,q}, \quad\text{ where $p_1,\dots,p_n,q\in\Phi$}.
\]
The semantics for the propositional dependence atoms are defined as follows:
\begin{align*}
X\models \dep{p_1,\dots,p_n,q} \quad\Leftrightarrow\quad& \forall s,t\in X: s(p_1)=t(p_1), \dots, s(p_n)=t(p_n)\\
&\text{implies that } s(q)=t(q).
\end{align*}
The next proposition is very useful when determining the complexity of \PD, and it is proved analogously as for   first-order dependence logic \cite{vaananen07}.
%
\begin{proposition}[Downwards closure]\label{dcprop}
Let $\varphi$ be a PD-formula and let $Y\subseteq X$ be propositional teams. Then 
$X\models \varphi$ implies $Y\models \varphi$.
\end{proposition}

In this article we study the variants of $\PD$ obtained by replacing dependence atoms in terms of the so-called independence or inclusion atoms: The syntax of \emph{propositional independence logic} $\PLI(\Phi)$ is obtained by extending the syntax of $\PL(\Phi)$ by the rule
\[
\varphi \ddfn  \indep{\tuple p}{\tuple q}{\tuple r},
\]
 where $\tuple p$, $\tuple q$, and  $\tuple r$ are finite tuples of proposition symbols (not  necessarily of the same length). The  syntax of \emph{propositional inclusion logic} $\PLInc(\Phi)$ is obtained by extending the syntax of $\PL(\Phi)$ by the  rule
\[
\varphi \ddfn   \tuple p \sub \tuple q,
\]
where $\tuple p$ and $\tuple q$ are finite tuples of proposition symbols with the same length. 
 Satisfaction for these atoms is defined as follows. 
If $\tuple p =(p_1, \ldots ,p_n)$ and $s$ is an assignment, we write $s(\tuple p)$ for $\left(s(p_1),\dots,s(p_n)\right)$.
\begin{align*}
X\models \indep{\tuple p}{\tuple q}{\tuple r} \quad\Leftrightarrow\quad& \forall s,t\in X: \text{ if }s(\tuple p)=t(\tuple p)\\
&\text{then there exists } u\in X: u(\tuple p\tuple q)=s(\tuple p\tuple q) \text{ and }   u(\tuple r)=t(\tuple r).\\
X\models {\tuple p}\sub {\tuple q} \quad\Leftrightarrow\quad& \forall s\in X \exists t\in X :s(\tuple p)=t(\tuple q).
\end{align*}

It is easy to check that neither  $\PLI$ nor $\PLInc$  is a downward closed logic (cf. Proposition \ref{dcprop}). However, analogously to first-order inclusion logic \cite{galliani12}, the formulas of  $\PLInc$   have the following closure property. 

\begin{proposition}[Closure under unions]\label{closureunions}
Let $\varphi \in \PLInc$ and let $ X_i$, for $i\in I$, be teams. Suppose that $X_i\models \varphi$, for each $i\in I$. Then $\bigcup_{i\in I} X_i \models \varphi$. 
\end{proposition}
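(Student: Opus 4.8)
The plan is to prove that propositional inclusion logic is closed under unions by structural induction on the formula $\varphi \in \PLInc$. The base cases are the literals $p$, $\neg p$ and the inclusion atoms $\tuple{p} \sub \tuple{q}$; the inductive cases are the connectives $\wedge$ and $\vee$. Throughout, I fix a family of teams $X_i$, $i \in I$, with $X_i \models \varphi$ for every $i$, and write $X \dfn \bigcup_{i \in I} X_i$, aiming to show $X \models \varphi$.

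For the literal $p$: if every $X_i \models p$ then every $s \in X_i$ has $s(p) = 1$, so every $s \in X$ has $s(p)=1$, giving $X \models p$; the case $\neg p$ is symmetric. For the inclusion atom $\tuple{p} \sub \tuple{q}$: given $s \in X$, pick the index $i$ with $s \in X_i$; since $X_i \models \tuple{p} \sub \tuple{q}$ there is $t \in X_i \sub X$ with $s(\tuple{p}) = t(\tuple{q})$, which is exactly what $X \models \tuple{p}\sub\tuple{q}$ demands. The key point I would highlight here is that the witness $t$ supplied inside a single $X_i$ remains available in the larger team $X$, so enlarging the team never destroys an inclusion witness — this is the crux of why union-closure holds and why it fails for dependence atoms.

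For the conjunction $\varphi = \psi \wedge \theta$: the hypothesis $X_i \models \psi \wedge \theta$ unfolds to $X_i \models \psi$ and $X_i \models \theta$ for all $i$; the induction hypothesis applied to each subformula separately yields $X \models \psi$ and $X \models \theta$, hence $X \models \varphi$. The disjunction $\varphi = \psi \vee \theta$ is the one step that needs genuine care: from $X_i \models \psi \vee \theta$ I obtain, for each $i$, a splitting $X_i = Y_i \cup Z_i$ with $Y_i \models \psi$ and $Z_i \models \theta$. I would then set $Y \dfn \bigcup_{i \in I} Y_i$ and $Z \dfn \bigcup_{i \in I} Z_i$, observe that $Y \cup Z = X$, and invoke the induction hypothesis on the families $(Y_i)_{i\in I}$ and $(Z_i)_{i\in I}$ to conclude $Y \models \psi$ and $Z \models \theta$, whence $X \models \varphi$.

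The main obstacle is precisely this disjunction step: one must resist the temptation to re-split $X$ directly and instead combine the componentwise splittings, which is what lets the induction hypothesis apply to the subformulas over the union families. I expect the verification that $Y \cup Z = X$ to be routine (it follows from $Y_i \cup Z_i = X_i$ and distributing the union over $I$), so the argument is essentially bookkeeping once the right $Y$ and $Z$ are chosen. No downward-closure or locality assumptions are needed anywhere, and the proof mirrors the first-order argument of \cite{galliani12}.
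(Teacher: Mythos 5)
Your proof is correct, and it is essentially the standard argument: the paper itself omits the proof, deferring to the first-order case in Galliani's work, and your structural induction (witnesses for inclusion atoms persist under enlargement; unions of the componentwise splittings handle $\vee$) is exactly the propositional analogue of that argument.
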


We will also consider the extensions of $\PL$, $\PLI$ and $\PLInc$, by the classical negation $\clneg$ with the standard semantics:
\[X\models \clneg \varphi \Leftrightarrow  X\not \models  \varphi. \]
These extensions are denoted by $\PL[\clneg]$ (propositional team logic),  $\PLIT$ and $\PLIncT$, respectively. 

A general notion of a \emph{generalized dependency atom} expressing a property of a propositional team has also been studied in the literature. For the purposes of this article  precise definitions are not required and are thus omitted, for a detailed exposition for generalised dependency atoms see, e.g., \cite{hakoluvi16}. We say that a generalized dependency atom $A$ has a polynomial time checkable semantics if $X\models A(\tuple p)$ can be decided in polynomial time with respect to the combined size of $X$ and $\tuple p$. Each of the atoms defined above are examples of generalized dependency atoms. It is easy to see that each of these atoms has a polynomial time checkable semantics.

\subsection{Auxiliary operators}\label{ao}
The following additional operators will be used in this paper:
\begin{align*}
X\models \varphi \cvee \psi \quad \Leftrightarrow \quad & X\models \varphi \textrm{ or } X \models \psi,\\
X \models \varphi \cwedge  \psi \quad \Leftrightarrow \quad & \forall\, Y,Z\sub X: \textrm{ if }Y\cup Z=X\textrm{, then } Y\models \varphi \textrm{ or } Z\models \psi,\\
X \models \varphi \intimp \psi\quad \Leftrightarrow \quad & \forall\, Y\sub X: \textrm{if } Y \models \varphi,\textrm{ then }  Y \models \psi,\\
X \models \ma{ x_1,\ldots ,x_n} \quad \Leftrightarrow \quad & \{ (s(x_1),\ldots ,s(x_n) )\mid s\in X\} =\{0,1\}^n.
\end{align*}
If $X \models \ma{\tuple x}$, we say that $X$ is \emph{maximal over $\tuple x$}. If tuples $\tuple x$ and $\tuple y$ are pairwise disjoint and $X \models \ma{\tuple x}\wedge  \indepc{\tuple x }{\tuple y}$, then we say that $X$ is \emph{maximal over $\tuple x$ for all $\tuple y$}. 
\begin{proposition}\label{trans}
The operators $\dep{\cdot}, \cvee, \cwedge, \intimp,$ and $\max(\cdot)$ have uniform polynomial size translations in  $\PL[\clneg]$.
\end{proposition}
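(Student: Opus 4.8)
The plan is to dispose of the four easy operators by direct rewriting and to isolate the maximality operator $\ma{x_1,\dots,x_n}$ as the one requiring real work. Throughout I would use that in team semantics $\wedge$ is genuine (classical) conjunction at the level of teams, that every (sub)formula contains at least one proposition symbol $p$, which we may reuse, and that then $p\wedge\neg p$ is satisfied exactly by the empty team while $\top := p\vee\neg p$ is satisfied by every team. The split formula $\top\vee\chi$ holds in $X$ iff some subteam of $X$ satisfies $\chi$ (take the other part to be $X$); this is the only way I will recover information about the ambient team.

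First I would translate $\cvee$, $\cwedge$ and $\intimp$ by De Morgan dualities and this ``some subteam'' operator:
\[
\varphi\cvee\psi \;:=\; \clneg(\clneg\varphi\wedge\clneg\psi),\qquad
\varphi\cwedge\psi \;:=\; \clneg(\clneg\varphi\vee\clneg\psi),\qquad
\varphi\intimp\psi \;:=\; \clneg\bigl(\top\vee(\varphi\wedge\clneg\psi)\bigr).
\]
Each is immediate from the semantics: the first because $\wedge$ is conjunction and $\clneg$ is complementation; the second because $\cwedge$ is exactly the De Morgan dual of the split disjunction; the third because $\top\vee(\varphi\wedge\clneg\psi)$ says ``some subteam satisfies $\varphi$ but not $\psi$,'' whose negation is precisely $\forall Y\sub X\,(Y\models\varphi\Rightarrow Y\models\psi)$. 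Crucially none of these rewrites duplicates $\varphi$ or $\psi$, so repeated local substitution turns any formula built from these connectives into an equivalent $\PLT$-formula of linear size, giving the claimed uniform polynomial bound.

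For the dependence atom I would use that $p_i\cvee\neg p_i$ expresses ``$p_i$ is constant on the (sub)team,'' so that $\bigwedge_i(p_i\cvee\neg p_i)$ defines exactly the $\tuple p$-constant subteams. Hence
\[
\dep{p_1,\dots,p_n,q}\;:=\;\Bigl(\textstyle\bigwedge_{i=1}^n (p_i\cvee\neg p_i)\Bigr)\intimp (q\cvee\neg q),
\]
whose meaning, after unfolding $\intimp$, is ``every subteam on which $\tuple p$ is constant has $q$ constant.'' This is equivalent to the atom: the forward direction is clear, and conversely any two assignments agreeing on $\tuple p$ lie in a common two-element $\tuple p$-constant subteam, forcing them to agree on $q$. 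The size is linear in $n$ and composes with the earlier translations without blow-up.

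The main obstacle is maximality. Writing $P_k$ for a formula expressing maximality over $(x_1,\dots,x_k)$, I would start from $P_0 := \clneg(p\wedge\neg p)$ (``nonempty'') and the recursion ``$X$ is maximal over $(x_1,\dots,x_k)$ iff both $x_k$-slices are maximal over $(x_1,\dots,x_{k-1})$.'' Since the split $(\neg x_k\wedge\alpha)\vee(x_k\wedge\beta)$ forces the unique partition of $X$ into its $x_k=0$ and $x_k=1$ parts, this reads
\[
P_k \;\equiv\; (\neg x_k\wedge P_{k-1})\vee (x_k\wedge P_{k-1}).
\]
This is correct but duplicates $P_{k-1}$, so naive unfolding produces a formula of size $\Theta(2^n)$. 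Avoiding this duplication is the heart of the proof, and the difficulty is structural: maximality is a superset-closed property detecting \emph{absent} cells, which are witnessed by no subteam, so neither the split disjunction, the dual $\cwedge$, nor $\intimp$ can separately constrain the two $x_k$-slices with a single occurrence of $P_{k-1}$ (the antecedent of an $\intimp$ cannot distinguish a full $x_k$-class from a proper subteam, since that requires reference to $X\setminus Y$). I therefore expect the decisive step to be a \emph{direct}, non-recursive encoding of ``the projection of $X$ onto $\tuple x$ is all of $\{0,1\}^n$,'' equivalently the classical negation of ``some cell is missing,'' realised in polynomial size by a single quantification over subteams together with classical negation, in the same spirit in which one intuitionistic implication captured the dependence atom, rather than by branching on each variable in turn. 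Establishing both the correctness and the polynomial size of such an encoding is the crux of the argument.
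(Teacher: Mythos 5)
Your handling of $\cvee$, $\cwedge$, $\intimp$, and $\dep{\cdot}$ is correct and essentially the paper's: the two De Morgan duals are identical to the paper's items, your dependence-atom encoding $\bigl(\bigwedge_{i=1}^n(p_i\cvee\neg p_i)\bigr)\intimp(q\cvee\neg q)$ is exactly the paper's $\bigwedge_{i=1}^n\dep{x_i}\intimp\dep{y}$, and your translation of intuitionistic implication as $\clneg\bigl(\top\vee(\varphi\wedge\clneg\psi)\bigr)$ is a correct, mildly different variant of the paper's $(\clneg\varphi\cvee\psi)\cwedge\clneg(p\vee\neg p)$. None of these rewrites duplicates $\varphi$ or $\psi$, so the polynomial (indeed linear) size bound goes through as you argue.

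The genuine gap is $\ma{\cdot}$: you correctly diagnose that the binary recursion on $x_k$-slices blows up exponentially and that the answer should be a classical negation of ``some cell is missing,'' but you never exhibit the formula or prove it correct --- you explicitly leave it as ``the crux.'' Since $\ma{\cdot}$ is one of the five operators in the statement, the proposition remains unproved. The paper's encoding is strikingly short: $\ma{x_1,\ldots,x_n}\equiv\clneg\bigvee_{i=1}^n\dep{x_i}$. The idea you missed is that ``some tuple $(b_1,\ldots,b_n)$ is absent from the projection of $X$ onto $\tuple x$'' is precisely what the $n$-fold split disjunction of the \emph{unary} atoms $\dep{x_i}$ asserts: if $X=Y_1\cup\cdots\cup Y_n$ with each $Y_i$ constant on $x_i$, then choosing $b_i$ to be the value that $x_i$ avoids on $Y_i$ (arbitrary if $Y_i=\emptyset$) yields a tuple realized by no $s\in X$, since every $s$ lies in some $Y_i$; conversely, if no $s\in X$ realizes $(b_1,\ldots,b_n)$, the sets $Y_i:=\{s\in X\mid s(x_i)\neq b_i\}$ cover $X$ and each satisfies $\dep{x_i}$. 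Writing each $\dep{x_i}$ as $x_i\cvee\neg x_i$ and applying your other translations, the result is linear in $n$. Note that this also refutes your structural claim that the split disjunction ``cannot separately constrain'' the slices: the trick is not a binary recursion over coordinates but a single $n$-ary split with one classical negation on the outside --- exactly the shape you conjectured, realized by the very connective you ruled out.
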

\begin{proof}
We present the following translations of which item 3 is due to \cite{muller14} and item 4 uses the idea of \cite{abramsky09}. 
\begin{enumerate}
\item The connective $\cwedge$ is actually the dual of $\vee$, and hence $\varphi \cwedge \psi$ can be written as $ \clneg (\clneg \varphi \vee \clneg \psi)$.
\item Intuitionistic disjunction $\varphi \cvee \psi$ can be written as $\clneg (\clneg \varphi \wedge \clneg \psi)$.
\item Intuitionistic implication $\varphi \intimp \psi$ can be expressed as $  (\clneg \varphi \cvee \psi) \cwedge \clneg(p\vee\neg p)$.


\item First note that $\dep{x}$ can be written as $x \cvee \neg x$. Using this we can write $ \dep{x_1, \ldots ,x_n,y}$ as $\bigwedge_{i=1}^n \dep{x_i} \intimp \dep{y}$. 

\item We show that $\ma{ x_1,\ldots ,x_n}$ is equivalent to  
$\clneg \bigvee_{i=1}^n \dep{x_i}.$
Assume first that $X \models \bigvee_{i=1}^n \dep{x_i}$, we show that $X \not\models \ma{ x_1,\ldots ,x_n}$. By the assumption, we find $Y_1, \ldots ,Y_n \in X $, $\bigcup_{i=1}^n Y_i  = X$, such that $Y_i \models \fdep{x_i}$. Now for all $i$ there exists a $b_i\in \{0,1\}$ such that if $Y_i \neq \emptyset$, then for all $s\in Y_i$, $s(x_i) \neq b_i$. Since the assignment $x_i \mapsto b_i$ is not in $ X$, we obtain that $X \not\models  \ma{ x_1,\ldots ,x_n}$. 

Assume then that  $X \not\models  \ma{ x_1,\ldots ,x_n}$, we show that  $X \models \bigvee_{i=1}^n \dep{x_i}$. By the assumption there exists a boolean sequence $(b_1, \ldots ,b_n) $ such that for no $s \in X$ we have  $s(x_i) = b_i$ for all $i=1, \ldots ,n$. Let $Y_i:= \{s \in X \mid s(x_i) \neq b_i\}$. Since then $X = \bigcup_{i=1}^n Y_i$ and $Y_i \models \fdep{x_i}$, we obtain that $X \models \bigvee_{i=1}^n \dep{x_i}$.
\end{enumerate}
\qed
\end{proof}

\subsection{Satisfiability, validity, and model checking in team semantics}
Next we define satisfiability and validity in the context of team semantics. Let $L$ be a logic with  team semantics. A formula $\varphi\in L$  is \emph{satisfiable}, if there exists a non-empty  team $X$ such that $X\models\varphi$.
A formula $\varphi\in  L$  is \emph{valid}, if $X\models\varphi$ holds for every non-empty team $X$ such that the proposition symbols that occur in $\varphi$ are in the domain of $X$.\footnote{It is easy to show that all of the logics considered in this article have the so-called locality property, i.e., satisfaction of a formula depends only on the values of the proposition symbols that occur in the formula \cite{galliani12}.} Note that when the team is empty, satisfaction becomes easy to decide, see Proposition \ref{emptyprop} below. 

The satisfiability problem $\SAT(L)$ and the validity problem $\VAL(L)$ are then defined in the obvious manner: Given  a formula $\varphi \in L$, decide whether the formula is satisfiable (valid, respectively). The variant of the model checking problem that we are concerned with in this article is the following:
Given  a formula $\varphi \in L$  and a   team $X$, decide whether $X\models\varphi$. See Table \ref{table:PLoldresults} for known complexity results on $\PL$ and $\PD$.

\begin{proposition}\label{emptyprop}
Checking whether $\emptyset \models \varphi$, for $\varphi \in \PL[\bot_{\rm c}\sub, \clneg]$, can be done in $\PTIME$. Furthermore, $\emptyset \models \varphi$ for all $\varphi \in \PL[\bot_{\rm c}\sub]$.
\end{proposition}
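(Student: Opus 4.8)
The plan is to prove the \emph{furthermore} part first by a routine structural induction, and then bootstrap it into the polynomial-time decision procedure for the first claim. For the \emph{furthermore} part I would show, by induction on the structure of $\varphi \in \PL[\bot_{\rm c},\sub]$, that $\emptyset \models \varphi$. In every base case satisfaction over the empty team reduces to a condition that is universally quantified over the assignments of the team: a literal $p$ or $\neg p$ quantifies $\forall s\in X$, the independence atom $\indep{\tuple p}{\tuple q}{\tuple r}$ quantifies $\forall s,t\in X$, and the inclusion atom $\tuple p\sub\tuple q$ has the form $\forall s\in X\,\exists t\in X$; each of these is vacuously true when $X=\emptyset$. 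For the inductive step, $\emptyset \models \varphi\wedge\psi$ follows immediately from the induction hypothesis applied to both conjuncts. The disjunction case is the only point requiring care: $\emptyset \models \varphi\vee\psi$ asks for a split $Y\cup Z=\emptyset$, which forces $Y=Z=\emptyset$, and applying the induction hypothesis to $\varphi$ and to $\psi$ then yields the claim. Hence every classical-negation-free formula is satisfied by the empty team.

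For the first claim I would convert this observation into a single bottom-up pass over the syntax tree. The crucial point is that over the empty team the truth of a formula is determined entirely by its classical-negation structure and is independent of the parameters of the atoms. Concretely, I would define an evaluation $e(\varphi)\in\{0,1\}$, with $1$ encoding $\emptyset \models \varphi$, by setting $e(\alpha)=1$ for every atom $\alpha$ (literal, independence, or inclusion atom), $e(\varphi\wedge\psi)=e(\varphi)\cdot e(\psi)$, $e(\varphi\vee\psi)=e(\varphi)\cdot e(\psi)$, and $e(\clneg\varphi)=1-e(\varphi)$. A structural induction then shows $e(\varphi)=1 \Leftrightarrow \emptyset \models \varphi$: the atomic and conjunctive cases reuse the base and conjunction steps above, the disjunctive case again collapses to conjunction because the empty team admits only the split into two empty teams, and the negation case uses the semantics $X\models\clneg\varphi \Leftrightarrow X\not\models\varphi$. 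Since $e$ is computed in one post-order traversal, the procedure runs in time linear in $|\varphi|$, giving the $\PTIME$ bound.

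The main obstacle, modest as it is, is the disjunction clause: unlike in single-assignment semantics, the team-semantic $\vee$ does not behave as classical ``or'' over $\emptyset$ but as conjunction, since the unique admissible decomposition of the empty team is into two copies of $\emptyset$. Once this is fixed, both claims follow. If the auxiliary operators $\cvee,\cwedge,\intimp,\ma{\cdot}$ are permitted to occur syntactically, I would either extend $e$ by the corresponding Boolean combinations obtained by specializing their semantics to $X=\emptyset$, or simply appeal to their polynomial-size translations from Proposition~\ref{trans}, which preserve both the $\PTIME$ bound and the empty-team value.
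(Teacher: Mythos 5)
Your proof is correct and takes essentially the same approach as the paper: your evaluation function $e$ is precisely the paper's function $\pi$ (atoms evaluate to $1$, both $\wedge$ and $\vee$ evaluate to the product, and $\clneg$ flips the bit), and both arguments rest on the observation that the only decomposition of the empty team is $\emptyset\cup\emptyset$. The only difference is that you spell out the structural induction and the \emph{furthermore} part explicitly, which the paper leaves as ``easy to check.''
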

\begin{proof}
Define a function $\pi:\PL[\bot_{\rm c},\sub, \clneg]\to\{0,1\}$ recursively as follows. Note that addition is $\hspace{-2mm}\mod 2$.
\begin{itemize}
\item If $\varphi\in\{p,\neg p, \indep{\tuple p}{\tuple q}{\tuple r},\tuple p \sub \tuple q\}$, then $\pi(\varphi)=1$.
\item If $\varphi = \psi_0 \wedge \psi_1$, then $\pi(\varphi)=\pi(\psi_0)\cdot \pi(\psi_1)$.
\item If $\varphi = \psi_0 \vee \psi_1$, then $\pi(\varphi)=\pi(\psi_0)\cdot \pi(\psi_1)$.
\item If $\varphi = \hspace{1mm}\clneg \psi$, then $\pi(\varphi) = \pi(\psi)+1$.
\end{itemize}
It is easy to check that $\emptyset \models \varphi$ iff $\pi(\varphi) =1$. Since $\pi(\varphi)$ can be computed in $\PTIME$, the claim follows.
\end{proof}


%

\begin{table}[!t]
\caption{Complexity of satisfiability, validity, and model checking of $\PL$ and $\PD$. All results are completeness results.}{
\begin{center}
\begin{tabular}{ccccc}\toprule
&       SAT  & VAL             & MC        & References\\ \midrule
\PL &   $\NP$     & co\NP    & $\NC^1$        & \cite{co71,Lev1973,Buss1987}      \\ 
\PD &   $\NP$     & \NEXPTIME        & $\NP$   &  \cite{lohvo13,ebbing12,DBLP:journals/corr/Virtema14}     \\    
\bottomrule
\end{tabular}
\end{center}
}
\label{table:PLoldresults}
\end{table}

\section{Complexity of Satisfiability and Validity}
In this section we consider the complexity of the satisfiability and validity problems for propositional independence, inclusion and team logic. 
\subsection{The logics  $\PLI$ and $\PLInc$}
We consider first the complexity of $\SAT(\PLI)$. The following simple lemma turns out to be very useful.

\begin{lemma}\label{PLIsingletons}
Let $\varphi \in \PLI$ and $X$ a team such that $X\models \varphi$. Then $\{s\}\models \varphi$, for all $s\in X$.
\end{lemma}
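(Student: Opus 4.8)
The plan is to prove the statement by structural induction on $\varphi$, showing that every singleton subteam $\{s\}$ with $s\in X$ inherits satisfaction from $X$. Even though $\PLI$ is \emph{not} downward closed in general, the feature that makes singletons behave well is that independence atoms are satisfied trivially by any singleton team; all other cases then propagate through the connectives.

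First I would treat the base cases. If $\varphi$ is a literal $p$ or $\neg p$, then $\varphi$ is an ordinary $\PL$-formula, so by Proposition~\ref{PLflat} (flatness) $X\models\varphi$ is equivalent to $s\models_{\mathit{PL}}\varphi$ for every $s\in X$; in particular $\{s\}\models\varphi$ for each such $s$. If $\varphi$ is an independence atom $\indep{\tuple p}{\tuple q}{\tuple r}$, I would observe that $\{s\}\models\indep{\tuple p}{\tuple q}{\tuple r}$ holds vacuously for every assignment $s$: the only pair of assignments available in $\{s\}$ is $(s,s)$, and choosing the witness $u\dfn s$ verifies the defining condition, since $s(\tuple p\tuple q)=s(\tuple p\tuple q)$ and $s(\tuple r)=s(\tuple r)$.

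For the inductive step, the conjunction case is immediate: if $X\models\varphi\wedge\psi$ then $X\models\varphi$ and $X\models\psi$, and two applications of the induction hypothesis give $\{s\}\models\varphi$ and $\{s\}\models\psi$, hence $\{s\}\models\varphi\wedge\psi$. The disjunction case is the one requiring care, and I expect it to be the main (mild) obstacle. Suppose $X\models\varphi\vee\psi$, witnessed by a split $Y\cup Z=X$ with $Y\models\varphi$ and $Z\models\psi$, and fix $s\in X$, so that $s\in Y$ or $s\in Z$. If $s\in Y$, the induction hypothesis yields $\{s\}\models\varphi$. To exhibit a split of $\{s\}$ witnessing $\{s\}\models\varphi\vee\psi$ I would use the decomposition $\{s\}=\{s\}\cup\emptyset$ together with the fact that $\emptyset\models\psi$, which holds for every $\psi\in\PLI$ by Proposition~\ref{emptyprop}, since $\PLI\subseteq\PLIncI$. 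The subcase $s\in Z$ is handled symmetrically, using $\{s\}=\emptyset\cup\{s\}$ and $\emptyset\models\varphi$. This completes the induction and hence the proof.
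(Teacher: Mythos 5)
Your proof is correct and takes essentially the same route as the paper's: induction on the structure of $\varphi$, with the key observation that every singleton team trivially satisfies all independence atoms. The paper dismisses the connective cases as ``straightforward''; your handling of disjunction via the split $\{s\}=\{s\}\cup\emptyset$ and $\emptyset\models\psi$ (Proposition~\ref{emptyprop}) is precisely the detail it glosses over.
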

\begin{proof} The claim is proved using induction on the construction of  $\varphi$.  It is easy to check that a singleton team satisfies all independence atoms, and the cases  corresponding to disjunction and conjunction are straightforward. 
\end{proof}

\begin{theorem}
$\SAT(\PLI)$ is complete for $\NP$.
\end{theorem}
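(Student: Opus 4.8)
The plan is to establish $\NP$-membership and $\NP$-hardness separately. For the lower bound, $\NP$-hardness is immediate: the problem $\SAT(\PL)$ of checking satisfiability for ordinary propositional logic is already $\NP$-complete (see Table~\ref{table:PLoldresults}), and $\PL$ is a syntactic fragment of $\PLI$. By Proposition~\ref{PLflat}, a $\PL$-formula is satisfiable in the team sense (by a non-empty team) precisely when it is satisfiable in the classical sense, so the standard $\SAT$ problem reduces trivially to $\SAT(\PLI)$.

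For the upper bound, the key idea is to use Lemma~\ref{PLIsingletons} to collapse the search for a satisfying team down to a search for a single satisfying assignment. First I would observe that if $\varphi \in \PLI$ is satisfiable at all, then it is witnessed by some non-empty team $X$ with $X \models \varphi$; picking any $s \in X$, Lemma~\ref{PLIsingletons} yields $\{s\} \models \varphi$. Conversely, any singleton team satisfying $\varphi$ is itself a non-empty witness to satisfiability. Hence $\varphi$ is satisfiable if and only if there exists a \emph{single} assignment $s$ over $\var{\varphi}$ with $\{s\} \models \varphi$.

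The algorithm is then the obvious guess-and-check procedure: nondeterministically guess an assignment $s : \var{\varphi} \to \{0,1\}$, which has size polynomial in $\varphi$, and then verify that $\{s\} \models \varphi$. The verification runs in polynomial time: each independence atom holds vacuously on a singleton team, and the dual negation literals, conjunctions, and disjunctions can all be evaluated directly on the single assignment (in effect reducing to ordinary propositional evaluation via Proposition~\ref{PLflat}). Since every generalized dependency atom here has a polynomial-time checkable semantics and the team has size one, the whole check is clearly in $\PTIME$. This places $\SAT(\PLI)$ in $\NP$.

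I do not expect a serious obstacle here, since Lemma~\ref{PLIsingletons} does the essential work. The only point requiring mild care is confirming that the intended satisfiability notion demands a \emph{non-empty} team, so that guessing a genuine assignment (rather than the empty team) is the right move; this matches the definition given just before Proposition~\ref{emptyprop}. Combining the $\NP$ upper bound with the $\NP$-hardness inherited from $\PL$ gives completeness.
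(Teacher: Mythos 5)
Your proof is correct and follows essentially the same route as the paper: $\NP$-hardness via Proposition~\ref{PLflat} and the $\NP$-completeness of $\SAT(\PL)$, and membership via Lemma~\ref{PLIsingletons}, which collapses satisfiability to singleton-team satisfaction. The only cosmetic difference is that you phrase the upper bound as a guess-and-check algorithm, whereas the paper packages the same observation as a syntactic reduction replacing each independence atom by the tautology $(p \vee \neg p)$ and invoking $\SAT(\PL) \in \NP$.
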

\begin{proof}
Note first that since $\SAT(\PL)$ is $\NP$-complete, it follows by Proposition \ref{PLflat} that $\SAT(\PLI)$ is  $\NP$-hard. For containment in $\NP$, note that  by Lemma \ref{PLIsingletons}, a formula
 $\varphi \in \PLI$ is satisfiable iff it is satisfied by some singleton team $\{s\}$. It is immediate that for any $s$, $\{s\}\models \varphi$ iff  $\{s\}\models \varphi^T$, where $\varphi^T\in \PL$ is acquired from $\varphi$
by replacing all independence atoms by $(p \vee \neg p)$. Thus it follows that $\varphi$ is satisfiable iff $\varphi^T$ is satisfiable. Therefore, the claim follows. \qed
\end{proof}
Next we consider the complexity of $\VAL(\PLI)$.
\begin{theorem}\label{valPLIhard}
$\VAL(\PLI)$ is hard for $\NEXPTIME$ and is in $\coNEXP^{\NP}$.
\end{theorem}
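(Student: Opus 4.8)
The plan is to prove the two bounds separately: $\NEXPTIME$-hardness by a reduction from the validity problem of propositional dependence logic, and membership in $\coNEXP^{\NP}$ by a guess-and-check argument for the complementary non-validity problem.

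For the lower bound I would reduce from $\VAL(\PD)$, which is $\NEXPTIME$-complete (see Table~\ref{table:PLoldresults}). The key observation is that every dependence atom is already definable by a single independence atom in team semantics: one checks directly from the semantic clauses that
\[
X \models \dep{p_1,\dots,p_n,q} \quad\Leftrightarrow\quad X \models \indep{(p_1,\dots,p_n)}{q}{q}
\]
for every team $X$. Indeed, the right-hand side demands that whenever $s(p_1,\dots,p_n)=t(p_1,\dots,p_n)$ there is a $u\in X$ with $u(p_1,\dots,p_n,q)=s(p_1,\dots,p_n,q)$ and $u(q)=t(q)$; taking $u=s$ as witness, this reduces to $s(q)=t(q)$, i.e.\ to functional dependence. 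Replacing every dependence atom of a $\PD$-formula by the corresponding independence atom yields a linear-size translation $\varphi\mapsto\varphi^*$ into $\PLI$ with $X\models\varphi \Leftrightarrow X\models\varphi^*$ for all teams $X$ (and the two formulas share the same variables); hence $\varphi$ is valid iff $\varphi^*$ is. This is a polynomial-time many-one reduction from $\VAL(\PD)$ to $\VAL(\PLI)$, establishing the hardness.

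For the upper bound I would show that the complementary non-validity problem lies in $\NEXPTIME^{\NP}$. By the locality of $\PLI$~\cite{galliani12} it suffices to consider teams whose domain is exactly $\var{\varphi}$; there are at most $2^{|\var{\varphi}|}$ assignments over this domain, so such a team $X\sub 2^{\var{\varphi}}$ can be written down explicitly in exponential size. The algorithm then nondeterministically guesses a non-empty team $X$ and accepts iff $X\not\models\varphi$; both the guess and the size of $X$ fit within a $\NEXPTIME$ bound. For the test $X\not\models\varphi$ I would invoke an $\NP$ oracle for the model-checking problem of $\PLI$: since every atom has a polynomial-time checkable semantics and the only existential choices in the semantics arise from splitting teams at disjunctions, model checking for $\PLI$ is in $\NP$ in the combined size of $X$ and $\varphi$ (cf.\ Table~\ref{newresults}). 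The oracle machine writes the pair $(X,\varphi)$ on its query tape, asks whether $X\models\varphi$, and accepts precisely when the answer is negative. Thus non-validity is in $\NEXPTIME^{\NP}$ and $\VAL(\PLI)\in\coNEXP^{\NP}$.

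The main point to check carefully is the interplay between the exponential size of the guessed team and the oracle call: the model-checking oracle is $\NP$ in the combined size of $X$ and $\varphi$, and since $|X|$ may be exponential in $|\varphi|$ the query itself has exponential length, so the argument is legitimate only because a $\NEXPTIME$ oracle machine is permitted to pose queries of exponential length. The hardness direction is essentially routine once the atom-definability equivalence above is verified; the matching of the two bounds remains open, which is why Table~\ref{newresults} records the pair $\NEXPTIME$/$\coNEXP^{\NP}$ rather than a completeness result.
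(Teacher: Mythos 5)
Your proposal is correct and follows essentially the same route as the paper: $\NEXPTIME$-hardness via the equivalence $\dep{\tuple p,q} \equiv \indep{\tuple p}{q}{q}$ and the $\NEXPTIME$-completeness of $\VAL(\PD)$, and membership in $\coNEXP^{\NP}$ by guessing an exponential-size counterexample team and deciding $X\not\models\varphi$ with an $\NP$ oracle for $\MC(\PLI)$ (the paper's Theorem~\ref{mcind}). Your additional verifications (the atom equivalence, locality, and the legitimacy of exponentially long oracle queries) are sound elaborations of steps the paper leaves implicit.
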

\begin{proof}
Since the dependence atom $\dep{\tuple x,y}$ is equivalent to the independence atom $\indep{\tuple x}{ y}{ y}$ and $\VAL(\PD)$ is  $\NEXPTIME$-complete \cite{DBLP:journals/corr/Virtema14}, hardness for $\NEXPTIME$ follows. We will show in Theorem \ref{mcind} on p.~\pageref{mcind} that the model checking problem for $\PLI$ is complete for $\NP$. It then follows that the complement of the problem $\VAL(\PLI)$ is in $\NEXPTIME^{\NP}$: 
the question whether $\varphi$ is in the complement of $\VAL(\PLI)$
 can be decided by guessing a subset $X$ of
$2^D$, where $D$ contains the set of proposition symbols appearing in $\varphi$, and checking whether $X\not\models \varphi$. Therefore $\VAL(\PLI)\in \coNEXP^{\NP}$.
\qed\end{proof}
Next we turn to propositional inclusion logic.
\begin{theorem}[\cite{hellakmv15}]
$\SAT(\PLInc)$ is complete for $\EXPTIME$.
\end{theorem}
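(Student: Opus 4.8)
The plan is to establish the two matching bounds separately, using throughout the fact that every team relevant to a formula $\varphi$ is a subset of the assignment space $2^D$, where $D=\var{\varphi}$ and $|2^D|=2^{|D|}$ is exponential in $|\varphi|$.

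For membership in $\EXPTIME$ I would exploit closure under unions (Proposition \ref{closureunions}). Since the family of teams satisfying $\varphi$ is closed under unions and consists of subsets of $2^D$, it has a greatest element $M_\varphi=\bigcup\{X\subseteq 2^D : X\models\varphi\}$, and $M_\varphi\models\varphi$. As any nonempty satisfying team is contained in $M_\varphi$, the formula $\varphi$ is satisfiable if and only if $M_\varphi\neq\emptyset$, so the task reduces to computing $M_\varphi$. I would do this by a greatest-fixpoint computation over the $2^{|D|}$-element universe $2^D$: start from the full team $2^D$ and repeatedly delete assignments that cannot lie in any satisfying subteam, re-testing membership after each round with the polynomial-time model-checking algorithm for $\PLInc$ (Table \ref{newresults}, \cite{hkmv16}) until the team stabilises. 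Each round removes at least one of the at most $2^{|D|}$ assignments and each model-checking call runs in time polynomial in $2^{|D|}+|\varphi|$, so the whole procedure runs in exponential time. The delicate point is to define the deletion step so that the resulting fixpoint is provably equal to $M_\varphi$; this uses union closure together with the locality of $\PLInc$, and some care is needed because conjunction is not preserved under intersections of teams.

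For $\EXPTIME$-hardness I would reduce from a canonical $\EXPTIME$-complete problem, namely acceptance by an alternating Turing machine running in polynomial space (so that, since $\mathsf{APSPACE}=\EXPTIME$, the source problem is $\EXPTIME$-complete), equivalently the winner problem for a succinctly presented alternating reachability game. A configuration of such a machine on input $w$ is described by polynomially many bits, hence corresponds to a single assignment over $O(\mathrm{poly}(|w|))$ proposition symbols, and the set of all configurations corresponds to the exponential-size universe $2^D$. The idea is to build a $\PLInc$-formula whose satisfying teams are exactly the sets of configurations that are closed under the successor relation and witness acceptance: the recurrence expressed by inclusion atoms $\tuple p \sub \tuple q$ — every configuration present in the team must have its successor present as well — enforces such closure conditions, while the splitting disjunction of team semantics routes the existential and universal branching of the machine. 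One then argues that the constructed formula is satisfiable precisely when the machine accepts $w$.

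I expect the hardness reduction to be the main obstacle. The difficulty is that $\PLInc$ is a monotone, union-closed logic (Proposition \ref{closureunions}): it cannot directly express functional or downward-closed constraints, so faithfully simulating the universal branching of an alternating machine — where one must certify that \emph{all} successors accept — inside a logic that only guarantees the \emph{presence} of witnesses is subtle. Getting the encoding correct in both directions, so that no spurious teams satisfy the formula and every accepting computation yields one, is where the real work lies.
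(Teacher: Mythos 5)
First, a point of comparison you could not have known: the paper does \emph{not} prove this theorem at all — it is imported verbatim from \cite{hellakmv15} (cf.\ Table~\ref{newresults}), so there is no in-paper argument to measure your sketch against, and it must be judged on its own terms. On those terms, your overall architecture is the right one (maximal teams for membership, alternating polynomial-space simulation for hardness), but each half has a genuine gap, in both cases exactly where you flag one. For the upper bound, reducing satisfiability to nonemptiness of $M_\varphi=\bigcup\{X\sub 2^D: X\models\varphi\}$ is sound by Proposition~\ref{closureunions} and locality; the gap is that your deletion rule is circular. Deciding whether an assignment ``can lie in some satisfying subteam'' \emph{is} the quantity being computed, and the polynomial-time model checker does not decide it: when the current team $T$ fails the check $T\models\varphi$, the checker gives no information about \emph{which} assignment to remove, and greedily removing assignments until satisfaction is restored is neither well-defined nor obviously confluent. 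What completes the argument is a structural recursion computing, for a team $X$ and each subformula $\psi$, the maximal subteam of $X$ satisfying $\psi$: for literals and inclusion atoms this is a local greatest-fixpoint computation; for $\psi_1\vee\psi_2$ it is the union of the two maximal subteams (union closure again); for $\psi_1\wedge\psi_2$ one alternates the two maximal-subteam operators until a common fixpoint is reached — iteration, not intersection, is the answer to your (correct) worry about conjunction. Each connective costs at most $|2^D|$ rounds, so the whole computation runs in time $2^{O(\mathrm{poly}(|\varphi|))}$, as required.

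For the lower bound, your plan contains a step that would actually fail and omits the idea that makes the reduction work. The failing step: routing existential/universal branching through the splitting disjunction $\vee$ is dangerous for inclusion atoms, since an atom evaluated inside a disjunct requires its witnesses \emph{inside that subteam}, so splitting destroys precisely the global closure property you rely on. The missing idea is that ``all successors accept'' requires no downward-closed or functional constraint at all: it is a $\forall\exists$ \emph{presence} condition, which is exactly the shape of an inclusion atom. Concretely, let each assignment carry a configuration $\tuple{x}$ and two successor fields $\tuple{y}_0,\tuple{y}_1$; a \emph{flat} $\PL$ conjunct (Proposition~\ref{PLflat} lets you impose it assignment-by-assignment) forces $\tuple{y}_0,\tuple{y}_1$ to hold the two legal successors when the state of $\tuple{x}$ is universal, one chosen legal successor duplicated when it is existential, and forbids rejecting halting configurations altogether; then two \emph{global} inclusion atoms $\tuple{y}_0\sub\tuple{x}$ and $\tuple{y}_1\sub\tuple{x}$ force the team to be closed under the successor relation, and one more inclusion atom whose left-hand side consists of propositions pinned to constants by conjuncts $t\wedge\neg f$ forces the initial configuration into every nonempty satisfying team. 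For a machine clocked to halt on all branches, nonempty satisfying teams are then exactly the configuration sets of accepting computation trees, giving both directions of the equivalence that your sketch leaves as ``one then argues.''
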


We end this section by determining the complexity of  $\VAL(\PLInc)$.
\begin{theorem}
$\VAL(\PLInc)$ is complete for $\coNP$.
\end{theorem}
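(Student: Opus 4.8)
The plan is to prove $\VAL(\PLInc)$ is complete for $\coNP$ by establishing hardness and membership separately. For hardness, I would reduce from the $\coNP$-complete validity problem of ordinary propositional logic: by Proposition~\ref{PLflat}, a $\PL$-formula $\varphi$ is valid in the ordinary sense iff every singleton team satisfies it, and one checks that $\PL$-validity in team semantics coincides with ordinary validity (relying on the flatness Proposition~\ref{PLflat} and the fact that every team decomposes into singletons). Since $\PL \subseteq \PLInc$, this already gives $\coNP$-hardness, so the substance of the theorem is the matching upper bound.

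For membership in $\coNP$, the key tool is the closure under unions property (Proposition~\ref{closureunions}). The idea is that to refute validity, it suffices to find a \emph{small} counterexample team rather than an arbitrary subset of $2^D$. Concretely, I would argue as follows: $\varphi$ is \emph{not} valid iff there is some non-empty team $X$ with $X \not\models \varphi$. The hard part will be bounding the size of a witnessing team. Here I would exploit closure under unions contrapositively. Suppose no team of size at most $|\varphi|$ (or some polynomial bound $k$) refutes $\varphi$; I want to conclude $\varphi$ is valid. If $X$ is any team with $X\not\models\varphi$, I would aim to show that $X$ can be written as a union $\bigcup_i X_i$ of small subteams, and then use Proposition~\ref{closureunions} to derive a contradiction: since each small $X_i\models\varphi$ by assumption, their union $X$ would satisfy $\varphi$ as well, contradicting $X\not\models\varphi$. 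The cleanest form of this is to show that if every singleton $\{s\}\subseteq X$ satisfies $\varphi$, then $X=\bigcup_{s\in X}\{s\}\models\varphi$ by closure under unions, so that a refuting team must contain a refuting singleton---giving a \emph{constant}-size witness.

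I expect the main obstacle to be that singletons may not suffice for inclusion logic in the way they do for independence logic (Lemma~\ref{PLIsingletons}): a singleton $\{s\}$ satisfies every inclusion atom $\tuple p\sub\tuple q$ only when $s(\tuple p)=s(\tuple q)$, which can fail, so the naive singleton argument breaks down. The fix is to combine closure under unions with a more careful analysis of the disjunction and atom cases. Rather than singletons, I would look for a counterexample team of polynomial size by the following nondeterministic algorithm witnessing the complement: guess a team $X\subseteq 2^D$ whose size is bounded by a polynomial in $|\varphi|$, then verify $X\not\models\varphi$ in polynomial time. Verification is polynomial because each inclusion atom and each connective has polynomial-time checkable semantics over a polynomially-sized team (cf.\ the remark on polynomial time checkable semantics in Section~2.1, though the disjunction requires care since it ranges over splits of $X$).

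The crux is therefore the \emph{small model property}: I must prove that if $\varphi$ is refuted by some team, it is refuted by a team of polynomial size. The plan is to take a minimal counterexample team $X$ with $X\not\models\varphi$ and argue that minimality forces $|X|$ to be bounded: using closure under unions, if $X=\bigcup_i X_i$ and every proper subteam in some cover satisfies $\varphi$, then $X$ does too; hence a $\subseteq$-minimal refuting team cannot be decomposed into smaller satisfying pieces, which I would leverage to extract a polynomial bound tied to the syntactic structure of $\varphi$. Once the polynomial bound is secured, the $\coNP$ upper bound follows immediately by the guess-and-check argument, completing the proof together with the $\coNP$-hardness established above.
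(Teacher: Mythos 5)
There is a genuine gap, and it is self-inflicted: you found the correct argument and then talked yourself out of it. Your ``cleanest form'' observation --- by Proposition~\ref{closureunions}, if every singleton $\{s\}$ satisfies $\varphi$ then any non-empty team $X=\bigcup_{s\in X}\{s\}$ satisfies $\varphi$, so $\varphi\in\PLInc$ is valid iff all singletons satisfy it --- is exactly the paper's proof, and the ``obstacle'' you raise against it is not an obstacle. The singleton characterization of \emph{validity} needs only closure under unions, which $\PLInc$ has; it does not need the property of Lemma~\ref{PLIsingletons} (satisfaction by a team implies satisfaction by its singleton subteams), which is a downward-closure property that $\PLInc$ indeed lacks but which is only relevant to \emph{satisfiability} (and accordingly $\SAT(\PLInc)$ is $\EXPTIME$-complete, not $\NP$). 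That a singleton $\{s\}$ may fail an inclusion atom is harmless: it just means validity over singletons is a non-trivial check, and over $\{s\}$ the atom $\tuple p \sub \tuple q$ is equivalent to the $\PL$-formula $\bigwedge_{i} p_i \leftrightarrow q_i$. Replacing every inclusion atom by this translation yields a $\PL$-formula $\varphi^*$ with $\varphi$ valid iff $\varphi^*$ valid, and $\VAL(\PL)\in\coNP$ finishes the upper bound. This is the paper's entire argument.

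Your fallback route is where the real gap sits. First, the ``small model property'' is announced as the crux but never proved; worse, your own minimal-counterexample plan, if carried out, collapses back to the singleton bound rather than to some polynomial bound ``tied to the syntactic structure of $\varphi$'': if $X$ is a $\subseteq$-minimal refuting team with $|X|\geq 2$, then all its non-empty proper subteams (in particular all its singletons) satisfy $\varphi$, so $X=\bigcup_{s\in X}\{s\}\models\varphi$ by Proposition~\ref{closureunions}, a contradiction --- hence a minimal refuting team is already a singleton. Second, even granting a polynomial-size witness, your verification step is unsupported: checking $X\not\models\varphi$ for $\varphi\in\PLInc$ in polynomial time is not a consequence of the atoms having polynomial-time checkable semantics (that remark concerns atoms only, and as you note, disjunction quantifies over splits of $X$); it is the non-trivial model-checking theorem $\MC(\PLInc)\in\PTIME$ cited from \cite{hkmv16}, which you would have to invoke explicitly. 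Over a singleton team none of this machinery is needed, since the $\PL$-translation does all the work.
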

\begin{proof} Recall that $\PL$ is a sub-logic of $\PLInc$, and hence   $\VAL(\PLInc)$ is hard for $\coNP$. Therefore, it suffices to show    $\VAL(\PLInc)\in \coNP$. It is easy to check that, by Proposition \ref{closureunions}, a formula $\varphi\in \PLInc$ is valid iff it is satisfied by all singleton teams $\{ s\}$. Note also that, over  a singleton team $\{ s\}$, an inclusion atom
$ (p_1,\ldots ,p_n) \sub (q_1,\ldots ,q_n)$ is equivalent to the $\PL$-formula 
\[  \bigwedge_{1\le i\le n} p_i\leftrightarrow q_i. \]
Denote by $\varphi^*$ the $\PL$-formula acquired by replacing all inclusion atoms in $\varphi$ by their $\PL$-translations. By the above, $\varphi$ is valid iff $\varphi^*$ is valid. Since  $\VAL(\PL)$ is in $\coNP$ the claim follows. 
\qed
\end{proof}

\subsection{Logics with the classical negation}
Next we incorporate classical negation in our logics. The main result of this section shows that  the satisfiability and validity problems for  $\PLT$  are  complete for $\AEP$. The result holds also for $\PL[\mathcal{C},\clneg]$ where $\mathcal{C}$ is any finite collection of dependency atoms with  polynomial-time checkable semantics. This covers the standard dependency notions considered in the team semantics literature. The upper bound follows by an exponential-time alternating algorithm where alternation is bounded by formula depth. For the lower bound we first relate $\AEP$ to polynomial-time alternating   Turing machines that query to oracles obtained from a quantifier prefix of polynomial length. We then show how to  simulate such computations in $\PLT$.  

First we observe that the classical negation gives rise to polynomial-time reductions between the validity and the satisfiability problems. Hence, we  restrict our attention  to satisfiability hereafter.

\begin{proposition}\label{same}
Let $\varphi \in \PL[\mathcal{C},\clneg]$ where $\mathcal{C} \sub \{\dep{\cdot},\bot_{\rm c}, \subseteq\}$.  Then one can construct in polynomial time formulae $\psi,\theta \in \PL[\mathcal{C},\clneg]$ such that
\begin{enumerate}[(i)]
\item $\varphi$ is satisfiable iff $\psi$ is valid, and
\item $\varphi$ is valid iff $\theta$ is satisfiable.
\end{enumerate}
\end{proposition}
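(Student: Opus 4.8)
The plan is to produce reductions between satisfiability and validity that \emph{preserve polarity}. Classical negation immediately yields the polarity-reversing equivalences ``$\varphi$ is satisfiable iff $\clneg\varphi$ is not valid'' and ``$\varphi$ is valid iff $\clneg\varphi$ is not satisfiable'', so the real task is to absorb the outer complementation. I would achieve this by forcing the implicit team quantifier hidden in $\SAT$ and $\VAL$ to range effectively over a single canonical team, namely the full team $2^{D}$ over $D=\var{\varphi}$, and then using the subteam quantifier $\intimp$ to internalise the existential (for $\SAT$) or the universal (for $\VAL$) over all candidate witness teams. Two abbreviations drive the construction. Writing $\chi:=\clneg(p_1\wedge\neg p_1)$ for some $p_1\in\var{\varphi}$, one checks that $X\models\chi$ iff $X\neq\emptyset$, so $\chi$ lets me restrict a $\intimp$-quantifier to non-empty subteams; and I use the team-level material implication $\alpha\Rightarrow\beta:=\clneg\alpha\cvee\beta$, for which $W\models\alpha\Rightarrow\beta$ iff $W\models\alpha$ implies $W\models\beta$. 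By Proposition~\ref{trans} the operators $\ma{\cdot},\cvee,\intimp$ are definable in $\PLT$, so both formulae below stay in $\PL[\mathcal{C},\clneg]$ and are of polynomial size; I may assume $\var{\varphi}\neq\emptyset$, as otherwise the instance is trivial.

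For part (i) I would set $\psi:=\ma{p_1,\ldots,p_k}\Rightarrow\clneg(\chi\intimp\clneg\varphi)$, where $p_1,\ldots,p_k$ enumerate $D=\var{\varphi}$. Using locality I first reduce validity of $\psi$ to checking all non-empty teams of domain exactly $D$; among these, the only one satisfying $\ma{p_1,\ldots,p_k}$ is $2^{D}$, while every other team falsifies the antecedent and hence satisfies the material implication vacuously. It then remains to unfold that $W\models\clneg(\chi\intimp\clneg\varphi)$ holds iff some non-empty $Y\subseteq W$ satisfies $\varphi$, so that $2^{D}$ satisfies the consequent iff $\varphi$ has a non-empty model, i.e.\ iff $\varphi$ is satisfiable. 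Hence $\psi$ is valid iff $\varphi$ is satisfiable.

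For part (ii) I would instead set $\theta:=\ma{p_1,\ldots,p_k}\wedge(\chi\intimp\varphi)$. Here the conjunct $\ma{p_1,\ldots,p_k}$ excludes every non-maximal team, so (again reducing to domain $D$ by locality) the only possible model is $2^{D}$; and $2^{D}\models\chi\intimp\varphi$ holds iff every non-empty subteam of $2^{D}$ satisfies $\varphi$, which is exactly validity of $\varphi$ over $D$. Thus $\theta$ is satisfiable iff $\varphi$ is valid. The asymmetry between the two constructions, a material-implication guard for (i) versus a conjunctive guard for (ii), is precisely what turns the universal quantifier of validity and the existential of satisfiability into a single test at the team $2^{D}$.

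The step I expect to be the main obstacle is pinning down $2^{D}$ as the unique relevant team: this requires the locality property to restrict to teams of domain exactly $\var{\varphi}$, after which $\ma{\cdot}$ genuinely forces the full team. Choosing the guard so that non-maximal teams are treated in the correct direction (vacuously true under the implication in (i), excluded by the conjunction in (ii)), and using $\chi$ to discard the empty team so that ``satisfiable'' in the non-empty-model sense matches, is where the care lies; the remaining verifications are routine unfoldings of the semantics of $\intimp$ and $\clneg$.
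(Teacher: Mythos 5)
Your proposal is correct and takes essentially the same route as the paper: both reductions use $\ma{\cdot}$ to pin down the full team $2^{D}$, use $\clneg(p\wedge\neg p)$ to express non-emptiness, and internalise the hidden team quantifier via the operators of Proposition~\ref{trans}; indeed your $\theta$ is literally the paper's $\theta := \ma{\tuple x} \wedge (\clneg(p\wedge\neg p) \intimp \varphi)$. The only difference is cosmetic, in $\psi$: the paper writes $\ma{\tuple x} \intimp ((p\vee\neg p)\vee(\varphi\wedge\clneg(p\wedge\neg p)))$, using $\intimp$ as the maximality guard and a split disjunction to say ``some non-empty subteam satisfies $\varphi$'', whereas you use the material implication $\clneg\ma{\tuple x}\cvee(\cdot)$ and the double-negated form $\clneg(\chi\intimp\clneg\varphi)$ for the same two purposes --- equivalent encodings of the same idea.
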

\begin{proof}
We define
\begin{align*}
\psi&:= \ma{\tuple x}  \intimp ((p\vee \neg p) \vee (\varphi\wedge {\clneg} (p\wedge \neg p))),\\
\theta& := \ma{\tuple x} \wedge ( \clneg (p\wedge \neg p) \intimp \varphi),
\end{align*}
where $\tuple x$ lists  $\var{\varphi}$. Note that $X \models  \clneg (p\wedge \neg p)$ iff $X$ is non-empty. It is straightforward to show that $(i)$ and $(ii)$ hold. Also by Proposition \ref{trans}, $\psi$ and $\theta$ can be constructed in polynomial time from $\varphi$.\qed
\end{proof}
Next we show the upper bound for the satisfiability problem of propositional logic with the classical negation, and the independence and inclusion atoms.
\begin{theorem}\label{hard}
$\SAT(\PL[\bot_{\rm c}, \sub, \clneg])\in\AEP$.
\end{theorem}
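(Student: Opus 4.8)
The goal is to show $\SAT(\PL[\bot_{\rm c},\sub,\clneg])\in\AEP$, that is, that satisfiability can be decided by an alternating Turing machine running in exponential time while making only polynomially many alternations. The plan is to design a recursive alternating procedure that evaluates $X\models\varphi$ by recursing on the structure of $\varphi$, and then to wrap it inside an outer existential guess of the witnessing team.

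My first step would be to bound the relevant universe. By the locality property (footnoted in the excerpt), satisfaction of $\varphi$ depends only on the proposition symbols $\var{\varphi}$, so the only teams worth considering are subsets of $2^D$ where $D=\var{\varphi}$. Since $\lvert D\rvert\le\lvert\varphi\rvert$, the set $2^D$ has at most exponentially many assignments, and a team $X\sub 2^D$ can be written down in exponential space. The outer phase of the algorithm is therefore a single existential branching: \emph{guess} a nonempty team $X\sub 2^D$ (exponentially many bits) and then verify $X\models\varphi$. The heart of the argument is the verification subroutine.

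The verification proceeds by structural recursion on $\varphi$, and the key observation is that the number of alternations it incurs is bounded by the number of nested connectives that flip the quantifier type, which is at most the syntactic depth of $\varphi$ and hence polynomial in the input. The atomic cases — literals $p,\neg p$, independence atoms $\indep{\tuple p}{\tuple q}{\tuple r}$, and inclusion atoms $\tuple p\sub\tuple q$ — are decidable deterministically in time polynomial in $\lvert X\rvert$ by the polynomial-time checkable semantics noted before Proposition \ref{trans}; since $\lvert X\rvert$ is at most exponential, these take exponential time with no alternation. For the conjunction $\varphi\wedge\psi$ I would branch universally into the two subgoals $X\models\varphi$ and $X\models\psi$. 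For the disjunction $\varphi\vee\psi$ I would branch existentially: guess a split $X=Y\cup Z$ (an exponential-size object) and recurse on $Y\models\varphi$ and $Z\models\psi$ in parallel via a universal branch. The classical negation $\clneg\varphi$ is handled by complementing the machine, i.e.\ swapping the roles of the existential and universal states in the recursive evaluation of $\varphi$; this is where alternations are introduced, and crucially each occurrence of $\clneg$ contributes at most one flip, so the total alternation count is at most the number of $\clneg$-symbols plus one, which is linear in $\lvert\varphi\rvert$.

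The main obstacle to get right is the alternation bookkeeping rather than the time bound. The disjunctive case $X=Y\cup Z$ is existential and the conjunctive case is universal, so a deeply nested alternation of $\wedge$ and $\vee$ could naively suggest linearly many alternations; but consecutive existential (respectively universal) branchings coalesce into a single alternation block, so only the transitions forced by $\clneg$ (and by passing between a $\wedge$-block and a $\vee$-block) actually count, and these are bounded by formula depth, hence polynomial. I would make this precise by tracking the current \emph{mode} (existential or universal) of the alternating machine and checking that the mode changes only at $\clneg$-nodes and at the at-most-$\md{\varphi}$ many boundaries between block types, so the number of alternations along any computation path is $O(\md{\varphi})=O(\lvert\varphi\rvert)$. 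Together with the exponential time bound this places $\SAT(\PL[\bot_{\rm c},\sub,\clneg])$ in $\AEXPTIME$ with polynomially many alternations, i.e.\ in $\AEP$.
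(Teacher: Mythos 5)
Your proposal is correct and follows essentially the same route as the paper: existentially guess a (possibly exponential-size) team over $\var{\varphi}$, then verify $X\models\varphi$ by an alternating recursive procedure that handles $\clneg$ by dualizing the machine's mode, with alternations bounded by the formula size and running time exponential. The paper packages the verification step as an explicit algorithm (Algorithm~\ref{algmc}, with a polarity flag $I$ playing the role of your ``mode''), but the argument is the same.
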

\begin{proof}
Let $\varphi \in \PL[\bot_{\rm c},\sub, \clneg]$. First existentially guess a possibly exponential-size team $T$ with domain $\var{\varphi}$. Then implement Algorithm \ref{algmc} (see Appendix) on $\Call{mc}{T,\varphi,1}$. The result follows since this algorithm  runs in polynomial time and its alternation is bounded by the size of $\varphi$. \qed

\end{proof}

Let us then turn to the lower bound. We show that the satisfiability problem of $\PLT$ is hard for $\AEP$. For  this, we first relate $\AEP$ to oracle quantification for polynomial-time oracle Turing machines. This approach is originally due to Orponen in \cite{orponen83}, where the classes $\siglevel{k}$ and $\pilevel{k}$ of the exponential-time hierarchy were characterized. 
Recall that the exponential-time hierarchy corresponds to the class of problems that can be recognized by an exponential-time  alternating Turing machine with constantly many alternations. In the next theorem we generalize Orponen's characterization to exponential-time  alternating Turing machines with \textit{polynomially} many alternations (i.e. the class $\AEP$)  by  allowing quantification of  polynomially many oracles. 

By $(A_1, \ldots ,A_k)$ we denote an efficient disjoint union of sets $A_1, \ldots ,A_k$, e.g.,  $(A_1,\ldots ,A_{k}) = \{(i,x) : x\in A_i, 1 \leq i \leq k\}$.


\begin{theorem}\label{char}
A set $A$ belongs to the class $\AEP$ iff there exist a polynomial $f$ and a polynomial-time  alternating oracle Turing machine $M$ such that, for all $ x$,
$$x \in A \textrm{ iff } Q_1 A_1 \ldots Q_{f(n)} A_{f(n)} (M\textrm{ accepts $x$ with oracles } (A_1, \ldots ,A_{f(n)}) ),$$
where $n$ is the length of $x$ and $Q_1,\dots, Q_{f(n)}$ alternate between $\exists$ and $\forall$, i.e., $Q_{i+1} \in \{\forall ,\exists\}\setminus \{Q_i\}$.
\end{theorem}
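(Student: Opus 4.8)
The plan is to prove both directions of the characterization by translating between $\AEP$ computations and oracle-quantified polynomial-time computations, following the template Orponen established for constantly many alternations but tracking the dependence on $f(n)$ carefully.

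\textbf{From oracle quantification to $\AEP$.} Suppose $A$ is defined by the right-hand side for some polynomial $f$ and polynomial-time alternating oracle machine $M$. I would construct an exponential-time alternating machine $N$ deciding $A$ as follows. On input $x$ of length $n$, $N$ alternately guesses the oracles $A_1,\dots,A_{f(n)}$: for each $i$, the machine enters an existential or universal phase (matching $Q_i$) and writes down the characteristic string of $A_i$. The key point is that $M$ runs in time $p(n)$ for some polynomial $p$, so it can only query strings of length at most $p(n)$; hence only the restriction of each $A_i$ to strings of length $\le p(n)$ is relevant, and this restriction has size $2^{O(p(n))}$, i.e.\ exponential. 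Thus each oracle-guessing phase can be carried out in exponential time with a single block of (existential or universal) nondeterminism. After all $f(n)$ oracles are guessed, $N$ simulates $M$ deterministically in polynomial time, answering each oracle query by table lookup into the guessed strings. The total number of alternations is $f(n)$ plus the (polynomially many) alternations internal to $M$, which is still polynomial, and the running time is exponential; so $N$ witnesses $A \in \AEP$.

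\textbf{From $\AEP$ to oracle quantification.} For the converse, let $A \in \AEP$ via an alternating machine $N$ running in time $2^{q(n)}$ with at most $r(n)$ alternations, both polynomials. I would set $f(n) := r(n)$ and design the oracles $A_1,\dots,A_{f(n)}$ so that $A_i$ encodes the configuration reached by $N$ at the $i$-th alternation phase, together with the information needed to verify that the transition between consecutive phases is correct. The quantifier $Q_i$ matches whether the $i$-th phase of $N$ is existential or universal. The polynomial-time oracle machine $M$ then only needs to verify, using short queries, that the guessed configurations form a legal accepting computation of $N$: it checks that phase $i{+}1$'s configuration is reachable from phase $i$'s by a suitable (existential/universal) segment of $N$'s computation. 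The subtlety is that a single phase of $N$ may run for exponentially many steps and touch an exponentially long configuration, so $M$ cannot read a configuration outright; instead, the oracles must be set up so that the configuration is addressed \emph{cell by cell} via polynomial-length queries (the query being a tape position, the oracle answer being that cell's contents), and the local-consistency of successive configurations is checked at a $M$-chosen position. This is exactly the place where the disjoint-union encoding $(A_1,\dots,A_{f(n)})$ and the polynomial query bound are used in tandem.

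\textbf{Main obstacle.} The hard part will be the second direction, specifically making the local-consistency verification work with only polynomial-time access to exponentially large configurations. One must arrange that checking an entire exponential-size computation reduces to a polynomial-time machine $M$ that (i) issues polynomially many polynomial-length queries per phase boundary and (ii) has its \emph{own} alternations interleave correctly with the oracle quantifiers, so that the prefix $Q_1 A_1 \cdots Q_{f(n)} A_{f(n)}$ faithfully mirrors the alternation structure of $N$. Getting the correspondence between $N$'s alternations and the quantifier prefix exactly right — in particular ensuring that an adversarial choice of a ``wrong'' oracle can always be caught by $M$ in the appropriate quantifier phase, so that neither false positives nor false negatives arise — is the crux; the encoding of configurations and the counting of alternations are routine once this verification scheme is fixed. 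I would model the argument closely on Orponen's proof in \cite{orponen83}, checking at each step that replacing the constant number of alternations by $f(n)$ preserves correctness and the polynomial bounds.
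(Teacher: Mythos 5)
Your if-direction (from oracle quantification to $\AEP$) follows the paper: alternately guess the restrictions of $A_1,\ldots,A_{f(n)}$ to strings of length at most $p(n)$, then evaluate $M$ with queries answered by lookup into the guessed tables. One slip: $N$ cannot simulate the \emph{alternating} machine $M$ ``deterministically in polynomial time''; either $N$ evaluates $M$'s computation tree deterministically in exponential time (the paper's choice), or it reuses its own alternations to mirror those of $M$ (which is what your alternation count implicitly assumes). Both repairs are immediate, so this direction stands.

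The only-if direction has a genuine gap, located exactly where you say the crux is: you never construct the verification scheme, and that scheme is the substance of the theorem. Two concrete problems. First, your encoding --- oracle $A_i$ stores ``the configuration reached at the $i$-th alternation phase, together with the information needed to verify that the transition between consecutive phases is correct'' --- is not yet a construction: certifying reachability across a phase that may last exponentially many nondeterministic steps forces the oracle to encode an \emph{entire computation sequence} of $N$, i.e.\ exponentially many configurations, each exponentially long, stored as triples $(i,j,\sigma)$ meaning ``symbol $\sigma$ sits in cell $j$ of the $i$-th configuration'' (so that $i,j$ in binary give polynomial-length queries); the paper does exactly this, and in addition requires consecutive oracles to agree on the configuration at the current alternation point $v$. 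Second, the adversarial-oracle issue you flag cannot be deferred, because it is resolved only by an asymmetric, dualized design of $M$. In a step simulating an \emph{existentially} quantified oracle, $M$ universally guesses a position $(i,j)$ and returns false if the local Cook--Levin window check, or the agreement with the previous oracle at $v$, fails; it then existentially guesses the next alternation point $w$, universally verifies that no universal state occurs strictly between $v$ and $w$, and existentially verifies that $w$ carries a universal state (or an accepting state when $w$ is the final index), updating $v\gets w$. For a \emph{universally} quantified oracle the whole step is dualized; in particular a malformed oracle makes $M$ return \emph{true}, not false. Without this asymmetry the reduction is wrong: a universal adversary could submit a garbage oracle and force rejection of every input, including members of $A$, producing false negatives. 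Appealing to Orponen's Theorem 5.2 does not supply these ingredients, since that result concerns constantly many alternations; the pointer $v$ tracking the last alternation point, and the interleaving of $M$'s own existential/universal guesses with a polynomially long oracle prefix, are precisely what the generalization must add and what your proposal leaves unbuilt.
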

\begin{proof}
The proof is a straightforward generalization of the proof of Theorem 5.2. in \cite{orponen83}:

\textit{If-part}. Let $M$ be a polynomial-time alternating oracle Turing machine, and let $f$ and $p$ be polynomials that bound the length of the oracle quantification and the running time of $M$, respectively.
We describe the behaviour of an alternating Turing machine $M'$ such that for all $ x$,
\begin{equation}\label{M'}
M' \textrm{ accepts $x$ iff } Q_1 A_1 \ldots Q_{f(n)} A_{f(n)}  (M\textrm{ accepts $x$ with oracle } (A_1, \ldots ,A_{f(n)}) ).
\end{equation}
At first, $M'$ simulates the quantifier block $Q_1 A_1 \ldots Q_{f(n)} A_{f(n)}$ in  $f(n)$ consecutive steps.
 Namely, for $1\leq k \leq f(n)$ where $Q_k=\exists$ (or $Q_k=\forall$), $M'$ \textbf{existentially} (\textbf{universally}) chooses a set $A_k$ that consists of strings $i$ of lenght at most $ p(n)$. 
 Then $M'$ evaluates the computation tree associated with the Turing machine $M$, the input $x$, and the  selected oracle  $(A_1, \ldots ,A_{f(n)})$. In this evaluation queries to $A_k$ are replaced with investigations of the corresponding selection. We notice that $M'$ constructed in this way satisfies \eqref{M'}, alternates $f(n)$ many times, and runs in time $2^{h(n)}$ for some polynomial $h$.

\textit{Only-if part}. Let  $M'$ be an alternating exponential-time Turing machine  with  polynomially many alternations. We show how to construct an alternating polynomial-time oracle Turing machine $M$ satisfying \eqref{M'}. 
W.l.o.g. we find polynomials $f$ and $g$ such that  $M'$ runs in time  at least $n$ and at most $2^{f(n)}-2$ and has at most $g(n)$ many alternations. 

Let $\#$ be a symbol that is not in the alphabet and denote $2^{f(n)}-1$ by $m$. Each configuration of $M'$ can be represented as a string
$$  \alpha =   u q  v\# \ldots \#, |\al  | = m,$$
with the meaning that $M'$ is in state $q$, has string $ u v$ on its tape, and reads the first symbol of string $ v$. The symbol $\#$ is only used to pad configurations to the same length. A computation of $M'$ over $ x$ may be represented as a sequence of configurations $\al_0, \al_1, \ldots , \al_{m}$ such that $ \al_0 = q_0x \#\ldots \#$ where $q_0$ is the initial state, $ \al_{m}=  u q  v\#\ldots \#$ where  $q$ is some final state, and for  $i\leq m-1$ either $ \al_{i+1}$ is reachable from $ \al_i$ with one step or $\al_i= \al_{i+1}= \al_m$.
Each oracle $A_k$ can encode a  computation sequence $\al ^k_0, \al^k_1, \ldots , \al^k_{m}$ with triples $(i,j, \al^k_{i,j})$ where  $|i|,|j|\leq f(n)$ and $ \al^k_{i,j}$ is the  $j$th symbol  of configuration  $ \al^k_{i}$.  Determining whether $k,i,j$ generate a unique $\al^k_{i,j}$ can be done with a bounded number of $A_k$ queries since there are only finitely many alphabet and state symbols in $M'$.

Next we describe the behaviour of the alternating polynomial-time oracle Turing machine $M$. The idea is to simulate the computation of $M'$ using the above succinct encoding. 
$M$ proceeds in $g(n)$ consecutive steps, and below we present step $k$ for  $1 \leq k \leq g(n)$ and $Q^k=\exists$. Notice that we use $v$ to indicate the last  alternation point  of $M'$, i.e., $v$ is a binary string that is initially set to $0$ and has always  length at most $ f(n)$. 
 Notice also that by $\al^0_{0,j}$ we refer to the $j$th symbol of configuration  $ \al_0 = q_0x \#\ldots \#$.\\

\vspace{-2mm}
\textbf{step $k$}:
\begin{compactenum}
\vspace{1mm}
\item \textbf{universally guess} $i,j$ such that $|i|,|j|\leq f(n)$ and $v \leq  i $;
\begin{enumerate}
\item[(1a)] \textbf{if} $ \al^{k-1}_{v,j}=\al^{k}_{v,j} $ and  $\al^k_{i,j-1},\al^k_{i,j},\al^k_{i,j+1},\al^k_{i,j+2}$ correctly determine $\al^k_{i+1,j}$ \textbf{then} proceed to (2);
\item[(1b)] \textbf{otherwise} return \texttt{false};
\end{enumerate}
\item \textbf{existentially guess} $w$ such that $|w|\leq f(n)$ and $v < w$;
\item  \textbf{universally guess}  $i,j$ such that $|i|,|j|\leq f(n)$ and $v< i < w$;
\begin{enumerate}
\item[(3a)] \textbf{if} $\al^k_{i,j}$ is not a universal state \textbf{then} proceed to (4);
\item[(3b)] \textbf{otherwise} return \texttt{false};
 \end{enumerate} 
 \item  \textbf{existentially guess} $j$ such that $|j|\leq f(n)$;
 \begin{enumerate} 
 \item[(4a)] \textbf{if} $w < m$ and $\al^k_{w,j}$ is a universal state \textbf{then} set $v\gets w$ and proceed to \textbf{step $k+1$};
 \item[(4b)] \textbf{else if} $w=m$ and $\al^k_{w,j}$ is an accepting state \textbf{then} return \texttt{true};
 \item[(4c)] \textbf{otherwise} return \texttt{false}.
 \end{enumerate}
 \vspace{-1mm}
\end{compactenum}
~\\
For $1 \leq k \leq g(n)$ and $Q^k=\exists$, step $k$ is described as the dual of the above procedure. Namely, it is obtained by replacing in item (1) universal guessing with existential one,  in item (1b) false with true, and in items (3a) and (4a) universal state with existential state. It is now straightforward to check that $M$ runs in polynomial time  and satisfies \eqref{M'}. 
\qed
\end{proof}

Using this theorem we now prove Theorem \ref{main}. For the quantification over oracles $A_i$, we use repetitively $\vee$ and $\clneg$. 

\begin{theorem}\label{main}
$\SAT(\PL[\clneg])$ is hard for $\AEP$.
\end{theorem}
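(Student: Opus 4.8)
The plan is to give a polynomial-time many-one reduction from an arbitrary set $A \in \AEP$ to $\SAT(\PL[\clneg])$, driving the construction by the characterization of Theorem \ref{char}. So I would fix a polynomial-time alternating oracle Turing machine $M$ together with polynomials $f$ (length of the oracle-quantifier prefix) and $p$ (running time of $M$) such that $x \in A$ iff $Q_1 A_1 \cdots Q_{f(n)} A_{f(n)}$ ($M$ accepts $x$ with oracles $(A_1, \ldots, A_{f(n)})$). Given an input $x$ of length $n$, I would construct in polynomial time a formula $\varphi_x \in \PL[\clneg]$ that is satisfiable iff this oracle-quantified acceptance condition holds. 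Throughout I would use the derived operators $\cvee, \cwedge, \intimp, \dep{\cdot}$ and $\ma{\cdot}$ freely, since by Proposition \ref{trans} they have uniform polynomial-size translations into $\PL[\clneg]$; hence the final formula stays in $\PL[\clneg]$ and of polynomial size.

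The core device is to represent each quantified oracle $A_k \subseteq \{0,1\}^{\le p(n)}$ by a team over a block of fresh proposition symbols $\tuple y$ of length $p(n)$ (padding shorter queries), tagged by an index block $\tuple c$ of length $\lceil \log f(n)\rceil$, so that a single team encodes the disjoint union $(A_1, \ldots, A_{f(n)})$: the assignment with $\tuple c = \cor{k}$ and $\tuple y = w$ is present exactly when $w \in A_k$. I would start from a team that is maximal over these blocks (using $\ma{\cdot}$), which guarantees that every candidate string is available, and then simulate the quantifier prefix $Q_1 A_1 \cdots Q_{f(n)} A_{f(n)}$: an existential oracle quantifier is realised by a splitting disjunction $\vee$, which picks out an arbitrary subteam as $A_k$, and a universal oracle quantifier by its dual $\cwedge$ (equivalently $\clneg(\cdots \vee \cdots)$). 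Relativising each split to the slice $\tuple c = \cor{k}$ is what keeps the previously chosen oracles fixed while the current one ranges over all subsets, reproducing exactly the alternation $Q_1, \ldots, Q_{f(n)}$.

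Conditioned on a fixed oracle encoding, I would encode the alternating polynomial-time computation of $M$ by a standard Tseitin-style tableau over polynomially many proposition symbols describing the $p(n)$-step run: the local transition constraints are plain $\PL$-formulas, $M$'s own existential and universal states are handled once more by $\vee$ and $\cwedge$, and each oracle query ``$w \in A_k$?'' is answered by a polynomial-size subformula that reads the membership bit off the oracle slice $\tuple c = \cor k$ of the team. Combining the oracle-quantifier block with this computation formula (joined via $\intimp$ from the maximality assumption, in the spirit of Proposition \ref{same}) yields $\varphi_x$; one then verifies that $\varphi_x$ is satisfiable iff the oracle-quantified acceptance condition holds, and that $\varphi_x$ has polynomial size and alternation depth and is produced in polynomial time.

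The step I expect to be the main obstacle is making the oracle lookups and the nested quantifier scopes cohere inside one team. Concretely, the splits simulating the oracle quantifiers and the splits simulating $M$'s alternation act on the same team, so I must keep the oracle-encoding slice globally fixed and untouched while the computation sub-team is split, and I must express the membership test ``the tagged assignment $(\cor k, w)$ lies in the current team'' purely with $\vee$, $\wedge$ and $\clneg$. Getting this lookup gadget right, and verifying that each universal split is genuinely forced to range over all subsets — which is precisely where maximality via $\ma{\cdot}$ is indispensable — is the delicate heart of the argument; the remaining transition-checking is a routine, if tedious, tableau encoding.
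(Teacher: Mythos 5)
Your high-level plan follows the same skeleton as the paper's proof: invoke Theorem \ref{char}, start from a team made maximal over the relevant variable blocks, simulate the oracle-quantifier prefix by $\vee$/$\cwedge$ splits, use $\clneg\neg(\cdots)$ as a membership test, simulate the machine by a tableau, and eliminate the derived operators via Proposition \ref{trans}. The genuine gap sits exactly where you flag ``the delicate heart'': your encoding of each oracle $A_k$ by \emph{presence or absence} of tagged assignments $(\cor{k},w)$ in the team cannot be made coherent with the later splits, and you do not supply the device that would fix this.

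Concretely, every assignment in a team carries values for \emph{all} variables, so the assignments forming your oracle slices also carry values of the computation/tableau variables, and by the initial maximality each pair $(\cor{k},w)$ occurs in many copies, one per valuation of those variables. This forces a dilemma. (i) If the computation splits act on the whole team (which they must if the query test is a same-assignment test), then an existential oracle split may keep $(\cor{k},w)$ only in copies with certain tableau valuations; once later splits --- including \emph{universal} ones --- fix those valuations, the effective oracle (the set of $w$ with a surviving copy) depends on choices made \emph{after} the oracle quantifier. The existential player thus in effect chooses a function from later universal choices to oracles rather than a single oracle; nothing in your construction forces uniformity, so the soundness direction ($\varphi$ satisfiable $\Rightarrow x\in A$) is unproven and in general false for such an encoding. (ii) If instead you protect the oracle slices by relativizing all computation splits away from them, the query test becomes a \emph{cross-assignment} comparison (``some oracle-slice assignment has $\tuple y$ equal to the query string stored in the computation assignments''); since $\neg$ in $\PL$ is atomic-only and the obvious expansions are exponential, you give no polynomial-size $\PL[\clneg]$ formula for this, and none is evident. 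The paper's construction is designed precisely to dodge this dilemma: oracle $A_i$ is encoded not by presence/absence but by the \emph{value} of a dedicated bit $r_i$ as a function of the query block $\tuple q$ (a characteristic function); each oracle split carries the dependence atom $\dep{\tuple q, r_i}$ on \emph{both} disjuncts, which simultaneously forces the chosen function to be uniform in all remaining coordinates and guarantees that every query string stays represented in the continuation team; the word-quantifier splits $\dep{y_j}$ then provably preserve this invariant, so the same-assignment test $\clneg\neg(\tuple q = \tuple a \wedge r_i)$ answers identically at every stage. (A secondary difference: the paper first converts the alternating oracle machine into a deterministic one with word quantifiers via \cite{chandra81}, so the computation itself needs no splits at all; your direct simulation of $M$'s alternation by $\vee$/$\cwedge$ is not wrong in principle, but it multiplies exactly the coherence obligations you have not discharged.)
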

\begin{proof}
Let $A \in \AEP$. From Theorem \ref{char} we obtain 
a polynomial $f$ and  an alternating oracle Turing machine $M$  with running time bounded by $g$. 
By \cite{chandra81}, the alternating machine can be replaced by a sequence of word quantifiers over a deterministic Turing machine. (Strictly speaking, \cite{chandra81} speaks only about a bounded number of alternations, but the generalization to the unbounded case is straightforward.)
W.l.o.g. we may assume  that 
each configuration of $M$ has at most two configurations reachable in one step. It then follows by Theorem \ref{char} that one can construct
a polynomial-time  deterministic oracle Turing machine $M^*$ such that $x\in A$ iff
\begin{align*}
Q_1 A_1 \ldots Q_{f(n)} A_{f(n)} &Q'_{1} \tuple   y_1\ldots Q'_{g(n)} \tuple y_{g(n)} 
\\&\textrm{($M^*$ accepts $(x,\tuple y_1, \ldots ,\tuple y_{g(n)})$ with oracle }
(A_1, \ldots ,A_{f(n)})),
\end{align*} 
where   $Q_1,\dots, Q_{f(n)}$ and $Q'_{1} ,\ldots ,Q'_{g(n)}$  are alternating sequences of quantifiers $\exists$ and $\forall$, and each $\tuple y_i$ is a $g(n)$-ary sequence of propositional symbols where $n$ is the length of $x$. Note that  $M^*$ runs in polynomial time also with respect to $n$. Using this characterization we now show how to reduce in polynomial time any $x$  to a formula $\varphi$ in $\PL[\clneg]$ such that $x\in A$ iff $\varphi$ is satisfiable. We construct $\varphi$ inductively. As a first step, we let 
$$\varphi := \ma{\tuple q \tuple r \tuple y} \wedge p_t \wedge \neg p_f \wedge \varphi_1 $$
where 
\begin{itemize}
\item $\tuple q$ and $\tuple r$ list propositional symbols that are used for encoding oracles;
\item $\tuple y$ lists propositional symbols that occur in $\tuple y_1, \ldots ,\tuple y_{g(n)}$ and in  $\tuple z_i$ that are used to simulate configurations of $M^*$ (see phase (3) below);
\item $p_t$ and $p_f$ are propositional symbols that do not occur in $\tuple q\tuple r \tuple y$.
\end{itemize}

\paragraph{\textbf{(1) Quantification over oracles}}
Next we show how to simulate quantification  over oracles.  W.l.o.g. we may assume that $M^*$ queries binary strings that are of length $h(n)$ for some polynomial $h$. Let $\tuple q$ be a sequence of length $h(n)$ and $\tuple r$ a sequence of length $f(n)$. Our intention is that $\tuple q$ with $r_i$ encodes the content of the oracle $A_i$; in fact $\tuple q$ and $r_i$ encode the characteristic function of the relation that corresponds to the oracle $A_i$.   
For a string of bits $\tuple b=b_1\ldots b_k$ and a sequence $\tuple s=(s_1, \ldots ,s_k)$ of proposition symbols, we write $\tuple s = \tuple b$ for $\bigwedge_{i=1}^k s_{i}^ {b_i}$, where $s_i^1  := s_i$ and $s_i^0 := \neg s_i$. The idea is that, given a team $X$ over $\tuple q\,\tuple r$, an oracle $A_i$, and a binary string $\tuple a =a_1\ldots a_{h(n)}$, the membership of $\tuple a$ in $ A_i$ is expressed by $X \models  \clneg \neg (\tuple q = \tuple a \wedge r_i)$.  Note that the latter indicates that there exists  $s \in X$ mapping $\tuple q \mapsto \tuple a$ and $r_i \mapsto 1$. Following this idea we next show how to simulate quantification over oracles $A_i$. We define $\varphi_{i}$, for $1 \leq i \leq f(n)$, inductively from root to leaves.  Depending on whether $A_i$ is existentially or universally quantified, we let
\begin{itemize}
\item[$\exists$:] $\varphi_{i}:= \dep{\vec{q}, r_i} \vee (\dep{\vec{q},r_i} \wedge \varphi_{i+1})$,
\item[$\forall$:] $\varphi_{i}:= \hspace{.5mm}\clneg \dep{\vec{q}, r_i} \cwedge (\clneg \dep{\vec{q}, r_i} \cvee \varphi_{i+1})$.
\end{itemize}
The formula $\varphi_{f(n)+1}$ will be  $\psi_1$  defined in step (2) below. 
Let us explain the idea behind the definitions of $\varphi_i$, first in the case of existential quantification. Assume that $X$ is a team such that 
\begin{equation}\label{eq1}
X \models \dep{\vec{q}, r_i} \vee (\dep{\vec{q},r_i} \wedge \varphi_{i+1}),
\end{equation}
and, for $j\geq i$,  $X$ is maximal over $r_{j}$ for all $\vec{z}_{j}$, where $\vec{z}_{j}$ lists all symbols from the domain of $X$ except $r_{j}$. Then by \eqref{eq1} we may  choose  two subsets $Y,Z \sub X$,  $Y\cup Z=X$, where $Y \models \dep{\vec{q}, r_i}$ and $Z \models \dep{\vec{q}, r_i} \wedge \varphi_{i+1}$. Note that since especially $X$ was maximal over $r_{i}$ for all $\vec{q}$, the selection of the partition $Y\cup Z=X$ essentially quantifies over the characteristic functions of the oracle $A_i$. Moreover, note that,  for $j\geq i+1$,  $Z$ is maximal over $r_{j}$ for all $\vec{z}_{j}$, where $\vec{z}_{j}$ is defined as above.

Universal quantification is simulated analogously. This time we have that
\begin{equation}\label{eq1.1}
X \models \clneg \dep{\vec{q}, r_i} \cwedge (\clneg \dep{\vec{q}, r_i} \cvee \varphi_{i+1}),
\end{equation}
and range over all subsets $Y,Z \sub X$ where  $Y\cup Z=X$. By \eqref{eq1.1}  for all such $Y$ and $Z$, we have that if $Y \models \dep{\vec{q},r_i}$ and $Z \models \dep{\vec{q},r_i}$ then $Z \models \varphi_{i+1}$ (see Section \ref{ao} for the definition of  $ \cwedge$). Using an analogous argument for $Z$ as in the existential case, we notice that the selection of $Z$ corresponds to  universal quantification over  characteristic functions of  $A_i$.

\paragraph{\textbf{(2) Quantification over propositional symbols}} 
Next we show how to simulate the quantifier block $Q'_1 \tuple y_1  \ldots Q'_{g(n)}\tuple y_{g(n)} \exists \tuple z$ where $\tuple z$ lists all propositional symbols that occur in $\tuple y$ but not in any $\tuple y_i$ (i.e. the remaining symbols that occur when simulating $M^*$). Assume that this quantifier block is of the form $Q^*_1 y_1 \ldots Q^*_l y_l$, and let $\psi_1:= \varphi_{f(n)+1}$. We define $\psi_i$ again top-down inductively. For $1 \leq i \leq l$, depending on whether $Q^*_i$ is $\exists$ or $\forall$, we let
\begin{itemize}
\item[$\exists$:] $\psi_{i}:= \dep{y_i} \vee (\dep{y_i} \wedge \psi_{i+1})$,
\item[$\forall$:] $\psi_{i}:= \hspace{.5mm}\clneg \dep{y_i} \cwedge (\clneg \dep{y_i}\cvee \psi_{i+1})$.
\end{itemize}
Let us explain the idea behind the two definitions of $\psi_i$. The idea is essentially the same as in the oracle quantification step. First in the case of existential quantification. Assume that we consider a formula $\psi_i$ and a team $X$ where \begin{equation}\label{eq2}
X \models \psi_i,
\end{equation}
and  $X$ is maximal over $ y_i \ldots y_l$ for all $\tuple q \tuple r y_1\ldots  y_{i-1}$. By \eqref{eq2} we may choose two subsets $Y,Z \sub X$,  $Y\cup Z=X$, where $Y\models  \dep{y_i}$ and $Z \models  \dep{y_i} \wedge \psi_{i+1}$. There are now two options: either we choose $Z= \{s\in X\mid s(y_i) =0\}$ or $Z= \{s\in X\mid s(y_i) =1\}$. Since $X$ is maximal over $ y_i \ldots y_l$ for all $\tuple q \tuple r y_1\ldots  y_{i-1}$, we obtain that $Z \upharpoonright \tuple q \tuple r = X \upharpoonright \tuple q \tuple r$ and $Z$ is maximal over $y_{i+1}\ldots y_l$ for all $\tuple q \tuple r y_1 \ldots y_i$. Hence no information about  oracles is lost in this quantifier step.

The case of universal quantification is again analogous to the oracle case. Hence we obtain that $\eqref{eq2}$ holds iff both $\{s\in X\mid s(y_i) =0\}$ and $\{s\in X\mid s(y_i) =1\}$ satisfy $\psi_{i+1}$.

\paragraph{\textbf{(3) Simulation of computations}}
Next we define $\psi_{g(n)+1}$ that simulates the polynomial-time deterministic oracle Turing machine $M^*$. Note that this formula is evaluated over a subteam $X$ such that $X \models \dep{y_i}$, for each $y_i\in \tuple y$, and $\tuple a \in A_i$ iff $X \models  \clneg \neg (\tuple q = \tuple a \wedge r_i)$. Using this it is now straightforward to construct a propositional formula $\theta$ such that 
$\exists  \tuple c (X[\tuple b_i/\tuple y_i][\tuple c / \tuple z] \models \theta)$ if and only if  $M^*$ accepts $(x,\tuple b_1, \ldots ,\tuple b_{g(n)})$ with oracle $(A_1, \ldots ,A_{f(n)})$.
Here $X[\tuple a/\tuple x]$ denotes the team $\{s(\tuple a/\tuple x):s\in X\}$ where $s(\tuple a / \tuple x)$ agrees with $s$ everywhere except that it maps $x$ pointwise to $\tuple a$. Each configuration of $M^*$ can be encoded with a binary sequence $\tuple z_i$ of length $O(t(n))$ where $t$ is a polynomial bounding the running time of $M^*$. Then it suffices to define $\psi_{l+1}$ as a conjunction of formulae $\theta_{\rm start}(\tuple z_0), \theta_{\rm move}(\tuple z_i,\tuple z_{i+1}), \theta_{\rm final}(\tuple z_{t(n)})$ describing that $\tuple z_0$ corresponds to the initial configuration, $\tuple z_i$ determines $\tuple z_{i+1}$, and $\tuple z_{t(n)}$ is in accepting state. Note that the formulae $\theta_{\rm start}(\tuple z_0)$, $\theta_{\rm move}(\tuple z_i,\tuple z_{i+1})$, and $\theta_{\rm final}(\tuple z_{t(n)})$ can be written exactly as in the classical setting, except that all disjunctions $\lor$ are replaced by the intuitionistic disjunction $\varovee$.

Finally note that, by Proposition \ref{trans}, all occurrences of dependence atoms, the shorthand $\max(\cdot)$, and the connectives $\varovee$ and $\cwedge$ can be eliminated from the above formulae by a polynomial overhead. Thus the constructed formula $\varphi$ is a $\PL[\clneg]$-formula as required.
\end{proof}

By Proposition \ref{same}, and   Theorems \ref{hard} and  \ref{main} we now obtain the following.
\begin{theorem}
Satisfiability and validity of $\PL[\bot_{\rm c}, \sub, \clneg]$ and $\PL[\clneg]$ are complete for $\AEP$.
\end{theorem}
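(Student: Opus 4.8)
The plan is to assemble this statement entirely from the three preceding results, treating it as the bookkeeping corollary it is, with no fresh construction required. First I would settle $\AEP$-completeness of the two satisfiability problems by a sandwich argument between the weak logic $\PLT$ and the strong logic $\PLIncIT$. For the upper bound, Theorem~\ref{hard} already gives $\SAT(\PLIncIT)\in\AEP$; since every $\PLT$-formula is syntactically a $\PLIncIT$-formula carrying the same team semantics, the containment $\SAT(\PLT)\in\AEP$ follows immediately by restricting the input domain of the same algorithm. For the lower bound, Theorem~\ref{main} gives that $\SAT(\PLT)$ is $\AEP$-hard; because $\PLT$ is a sublogic of $\PLIncIT$, the hardness reduction produces $\PLT$-formulae, which are also legitimate $\PLIncIT$-instances, so the hardness lifts to $\SAT(\PLIncIT)$. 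Thus both satisfiability problems are squeezed to $\AEP$-completeness.

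Next I would transfer these bounds to the two validity problems using the polynomial-time interreducibility furnished by Proposition~\ref{same}. That proposition applies to both logics under consideration, since it is stated for any $\mathcal{C}\subseteq\{\dep{\cdot},\bot_{\rm c},\subseteq\}$, and I take $\mathcal{C}=\emptyset$ for $\PLT$ and $\mathcal{C}=\{\bot_{\rm c},\subseteq\}$ for $\PLIncIT$. Reading item~(ii) as the construction $\varphi\mapsto\theta$ yields the reduction $\VAL(L)\le^{\mathrm{p}}_{\mathrm{m}}\SAT(L)$, and reading item~(i) as the construction $\varphi\mapsto\psi$ yields $\SAT(L)\le^{\mathrm{p}}_{\mathrm{m}}\VAL(L)$, for each $L\in\{\PLT,\PLIncIT\}$. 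Each of these is genuinely many-one, being a single formula construction, and each runs in polynomial time by the final remark of Proposition~\ref{same}.

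The only point I would pause to justify is that $\AEP$ is closed under polynomial-time many-one reductions, which is what lets completeness of $\SAT$ propagate to $\VAL$ along these maps. This is routine: given $f$ computable in polynomial time, prepending $f$ to the $\AEP$-machine for the target problem yields a machine whose input grows only polynomially, hence whose running time $2^{\mathrm{poly}(|f(x)|)}$ stays exponential in $|x|$ and whose alternation count stays polynomial. With this closure in hand, $\VAL(L)\le^{\mathrm{p}}_{\mathrm{m}}\SAT(L)\in\AEP$ gives $\VAL(L)\in\AEP$, while $\AEP$-hardness of $\SAT(L)$ together with $\SAT(L)\le^{\mathrm{p}}_{\mathrm{m}}\VAL(L)$ gives $\AEP$-hardness of $\VAL(L)$. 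Combining all four conclusions finishes the theorem.

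I expect no genuine obstacle here, as the statement is a direct consequence of the machinery already developed. The only care needed is to check that the reductions of Proposition~\ref{same} are used in the correct direction for membership versus hardness, and that the sublogic inclusion $\PLT\subseteq\PLIncIT$ is exploited appropriately: restriction of inputs for the upper bound and a hardness reduction targeting the smaller logic for the lower bound.
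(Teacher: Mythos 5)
Your proposal is correct and matches the paper's approach exactly: the paper derives this theorem as an immediate consequence of Theorem~\ref{hard} (upper bound), Theorem~\ref{main} (lower bound, lifted from $\PL[\clneg]$ to the larger logic), and Proposition~\ref{same} (the $\SAT$/$\VAL$ interreductions), which is precisely your sandwich-plus-transfer argument. You merely spell out the routine details (sublogic inclusions, direction of the reductions, closure of $\AEP$ under polynomial-time many-one reductions) that the paper leaves implicit.
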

The following corollary now follows by a direct generalisation of Theorem \ref{hard}.
\begin{corollary}
Let $\mathcal{C}$ be a finite collection of dependency atoms with polynomial-time checkable semantics. Satisfiability and validity of $\PL[\mathcal{C}, \clneg]$ is complete for $\AEP$.
\end{corollary}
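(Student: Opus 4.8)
The plan is to prove the two completeness claims by reducing each to the results already established for the special cases, the single new ingredient being a re-examination of the model-checking routine underlying Theorem~\ref{hard}.

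For the hardness direction I would first note that $\PL[\clneg]$ is syntactically a sublogic of $\PL[\mathcal{C},\clneg]$, since the collection $\mathcal{C}$ merely adds new atomic formation rules to the grammar while leaving the connectives unchanged. Consequently every $\PL[\clneg]$-formula is already a $\PL[\mathcal{C},\clneg]$-formula, so the identity map is a trivial polynomial-time reduction and the $\AEP$-hardness of $\SAT(\PL[\clneg])$ from Theorem~\ref{main} transfers immediately to $\SAT(\PL[\mathcal{C},\clneg])$.

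For containment of satisfiability I would carry out the promised direct generalisation of Theorem~\ref{hard}. That argument existentially guesses a (possibly exponential-size) team $T$ over $\var{\varphi}$ and then runs the model-checking procedure of Theorem~\ref{hard} on $T$ and $\varphi$; the procedure runs in time polynomial in the combined size of $T$ and $\varphi$, and its alternation is bounded by the size of $\varphi$. The only clauses of that procedure that depend on which atoms occur are the base cases deciding $T\models A(\tuple p)$; the inductive clauses for $\wedge$, $\vee$, and $\clneg$, which are the sole source of alternations, are indifferent to the atoms present. Since by hypothesis each $A\in\mathcal{C}$ has polynomial-time checkable semantics, the test $T\models A(\tuple p)$ can be decided deterministically in time polynomial in $|T|+|\tuple p|$, hence in time $2^{O(|\varphi|)}$, and without introducing any alternation. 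Substituting this test into the base case yields an alternating procedure that still runs in exponential time with polynomially many alternations, so $\SAT(\PL[\mathcal{C},\clneg])\in\AEP$.

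The validity side I would obtain by reusing Proposition~\ref{same}. Although it is stated for $\mathcal{C}\sub\{\dep{\cdot},\bot_{\rm c},\subseteq\}$, its construction wraps the input formula only with material expressible in $\PL[\clneg]$ (namely $\max(\cdot)$, $\intimp$, $\clneg$, and literals, each available by Proposition~\ref{trans}) and never inspects the atoms of $\varphi$; it therefore applies to an arbitrary finite $\mathcal{C}$. Part~$(ii)$ then reduces validity to satisfiability in polynomial time, giving $\VAL(\PL[\mathcal{C},\clneg])\in\AEP$ once we note that $\AEP$ is closed under polynomial-time many-one reductions, since such a precomputation only polynomially enlarges the input and adds no alternations. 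Dually, part~$(i)$ reduces satisfiability to validity, so the hardness established above for $\SAT$ transfers to $\VAL$. I expect the main obstacle to be the bookkeeping of the containment step: one must confirm that the atom evaluations sit at the leaves of the computation as self-contained deterministic subroutines, so that the exponential cost of evaluating a $\mathcal{C}$-atom over the exponential-size team $T$ inflates only the running time and never the alternation count, which must remain a function of the formula's logical structure alone.
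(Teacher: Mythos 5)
Your proposal is correct and follows essentially the same route the paper intends: $\AEP$-hardness inherited from $\SAT(\PL[\clneg])$ via the sublogic inclusion (Theorem~\ref{main}), membership by generalising the guess-a-team-then-alternate algorithm of Theorem~\ref{hard} with the polynomial-time atom checks as deterministic, alternation-free base cases, and Proposition~\ref{same} (whose construction indeed never inspects the atoms of $\varphi$) to transfer everything to validity. You merely spell out the details that the paper compresses into ``follows by a direct generalisation of Theorem~\ref{hard}.''
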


\section{Complexity of Model Checking}

In this section we consider the  model checking problems of our logics. We first focus on logics without the  classical negation. 
\begin{theorem}\label{mcind}
$\MC(\PLI)$ is complete for $\NP$.
\end{theorem}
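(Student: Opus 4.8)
The plan is to prove the two bounds separately: $\NP$-hardness by a reduction from $\MC(\PD)$, and membership in $\NP$ by a nondeterministic guess-and-verify procedure over the syntax tree of the input formula. The key structural fact I exploit is that in the model-checking problem the team $X$ is given explicitly as part of the input, so $|X|$ is polynomial in the input size and every subteam encountered during the evaluation is polynomially bounded.

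For hardness I would argue exactly as in the proof of Theorem~\ref{valPLIhard}: the dependence atom $\dep{\tuple x, y}$ is equivalent to the independence atom $\indep{\tuple x}{y}{y}$. Hence, replacing in a $\PD$-formula $\varphi$ each dependence atom by the corresponding independence atom produces, in linear time, a $\PLI$-formula $\varphi'$ such that $X \models \varphi$ iff $X \models \varphi'$ for every team $X$. Since the remaining syntax of $\PD$ is already contained in $\PLI$, the map $(\varphi, X) \mapsto (\varphi', X)$ is a polynomial-time many-one reduction from $\MC(\PD)$ to $\MC(\PLI)$. As $\MC(\PD)$ is $\NP$-complete (Table~\ref{table:PLoldresults}), $\MC(\PLI)$ is $\NP$-hard.

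For membership in $\NP$ I would guess and verify an annotation of the syntax tree of $\varphi$ by subteams. The idea is to assign to every subformula occurrence $\psi$ a team $X_\psi \sub X$ subject to the constraints dictated by team semantics: $X_\varphi = X$ at the root; $X_{\psi_0} = X_{\psi_1} = X_\psi$ at each conjunction $\psi = \psi_0 \wedge \psi_1$; and $X_{\psi_0} \cup X_{\psi_1} = X_\psi$ at each disjunction $\psi = \psi_0 \vee \psi_1$. A nondeterministic machine guesses all of these teams simultaneously; because there are at most $|\varphi|$ subformula occurrences and each guessed team is a subset of the explicitly given $X$, the whole annotation has size $O(|\varphi|\cdot |X|)$ and is written down in polynomial time. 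The machine then checks deterministically that the structural constraints hold, that each propositional literal is satisfied by its team, and that each independence atom is satisfied by its team; the last check runs in polynomial time since independence atoms have polynomial-time checkable semantics. A routine induction on the construction of $\varphi$ shows that a consistent annotation exists if and only if $X \models \varphi$, giving $\MC(\PLI) \in \NP$.

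The only delicate point — and the one I expect to require the most care — is the disjunction rule, whose semantics quantifies existentially over the exponentially many splits $Y \cup Z = X$. The guess-and-verify formulation handles this by recording a single chosen split at each disjunction node inside the certificate rather than searching over all of them, and the quantitative reason this stays within $\NP$ is precisely that $X$ is supplied explicitly, keeping all subteams, and hence the certificate, polynomially bounded. Note that no universal branching (alternation) is needed, since $\PLI$ contains no classical negation and the disjunction calls only for purely existential guessing.
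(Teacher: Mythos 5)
Your proof is correct, and your lower bound is exactly the paper's: both reduce from $\MC(\PD)$ via the equivalence of $\dep{\tuple x,y}$ with $\indep{\tuple x}{y}{y}$, whose $\NP$-hardness follows from Table \ref{table:PLoldresults}. Where you genuinely diverge is the upper bound. The paper disposes of it in one line by citing the $\NP$-completeness of model checking for \emph{modal} independence logic \cite{DBLP:journals/corr/KontinenMSV14a}, of which $\PLI$ is the modality-free fragment. You instead give a self-contained guess-and-verify argument: nondeterministically annotate every subformula occurrence with a subteam of the input team $X$, subject to $X_\varphi = X$ at the root, equality at conjunctions and union-splitting at disjunctions, then verify literals and independence atoms deterministically. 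Your key observations are all sound: the certificate has size $O(|\varphi|\cdot|X|)$ precisely because $X$ is explicit in the input; annotating \emph{occurrences} rather than subformulas is what makes the bottom-up induction go through; and no alternation is needed in the absence of classical negation, since disjunction only calls for existential splitting (contrast this with Algorithm \ref{algmc}, which needs universal branching to handle $\clneg$). What your route buys is generality and transparency: the same certificate argument immediately yields $\MC(\PL[\mathcal{C}])\in\NP$ for any finite collection $\mathcal{C}$ of generalized dependency atoms with polynomial-time checkable semantics, with no appeal to the modal literature, and it makes explicit why the exponentially many splits at each disjunction collapse into a polynomial witness. What the paper's route buys is brevity, at the cost of importing an external result.
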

\begin{proof}
The upper bound follows since the model checking problem for modal independence logic is $\NP$-complete \cite{DBLP:journals/corr/KontinenMSV14a}. Since dependence atoms can be expressed efficiently by independence atoms (see the proof of Theorem \ref{valPLIhard}), the lower bound follows from the $\NP$-completeness of $\MC(\PD)$ (see Table \ref{table:PLoldresults}).
\end{proof}
The following unpublished result was shown by Hella et al.
\begin{theorem}[\cite{hkmv16}]
$\MC(\PLInc)$ is $\PTIME$-complete.
\end{theorem}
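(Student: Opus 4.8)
My plan is to prove the two directions separately: membership in $\PTIME$ using the union-closure property of $\PLInc$ (Proposition \ref{closureunions}), and $\PTIME$-hardness by a logarithmic-space reduction from a canonical $\PTIME$-complete problem.

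For the upper bound, the starting observation is that, by Proposition \ref{closureunions} together with the fact that the empty team satisfies every such formula (Proposition \ref{emptyprop}), for every subformula $\psi \in \SF{\varphi}$ and every team $Y$ there is a \emph{unique maximal} subteam $M_\psi(Y) \sub Y$ with $M_\psi(Y) \models \psi$, namely the union of all satisfying subteams. Moreover $Y \models \psi$ iff $M_\psi(Y) = Y$, so in particular $X \models \varphi$ iff $M_\varphi(X) = X$, and it suffices to compute $M_\varphi(X)$. I would establish the following recursive characterisations, all routine from the semantics: for a literal, $M_p(Y) = \{ s \in Y : s(p) = 1\}$ and dually for $\neg p$; for an inclusion atom, $M_{\tuple p \sub \tuple q}(Y)$ is obtained from $Y$ by repeatedly deleting every assignment that lacks a witness, stabilising after at most $|Y|$ rounds; for disjunction, $M_{\psi_1 \vee \psi_2}(Y) = M_{\psi_1}(Y) \cup M_{\psi_2}(Y)$, where both disjuncts are evaluated over the same team $Y$; and for conjunction, $M_{\psi_1 \wedge \psi_2}(Y)$ is the greatest fixpoint of the operator $Z \mapsto M_{\psi_1}(Z) \cap M_{\psi_2}(Z)$ on the subteams of $Y$. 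Each operator is monotone in $Y$, which is what makes these fixpoints well behaved.

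The main obstacle is turning this into a genuinely polynomial-time procedure. Evaluating the characterisations by naive recursion is exponential: every conjunction triggers a fixpoint iteration that recomputes both of its arguments from scratch, so a chain of nested conjunctions incurs a factor of $|X|$ per level. The fix I would pursue is to merge all these fixpoints into a \emph{single} simultaneous greatest fixpoint over the product lattice $(\powerset{X})^{\SF{\varphi}}$, carrying one surviving subteam per subformula occurrence together with one global monotone update that propagates teams downward (a conjunction passes its current value to both children, a disjunction passes its own input team) and recomputes surviving sets upward. Since each of the $|\SF{\varphi}|$ coordinates is a subset of $X$ that can only shrink, any strictly decreasing chain in $(\powerset{X})^{\SF{\varphi}}$ has length at most $|X| \cdot |\SF{\varphi}|$; hence the fixpoint is reached after polynomially many iterations, each computable in polynomial time, and reading off whether the root coordinate equals $X$ decides membership. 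The delicate point, and the crux of the correctness proof, is to verify that this one simultaneous fixpoint really coincides with the nested semantics above — in particular that the circular dependence at conjunction nodes, whose downward team is determined by its own surviving set, is exactly what the fixpoint resolves.

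For $\PTIME$-hardness I would give a logarithmic-space reduction from the monotone circuit value problem. The idea is to encode the gates of a monotone circuit by designated assignments of a fixed-domain team $X$, to represent an \textsc{or}-gate by $\vee$ and an \textsc{and}-gate by $\wedge$, and to use inclusion atoms to transport gate values along the wires, an inclusion atom forcing the value carried by one gate to reappear at its inputs, mirroring the maximal-subteam fixpoint. One then assembles $\varphi$ following the circuit structure so that $X \models \varphi$ holds exactly when the output gate evaluates to $1$; the duality between the least fixpoint computing circuit values and the greatest fixpoint underlying $\PLInc$ is absorbed into the encoding. Checking that the reduction is computable in logarithmic space and that it preserves the answer is the remaining, largely bookkeeping, task.
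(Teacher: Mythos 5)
The paper itself gives no proof of this theorem --- it is quoted from the unpublished manuscript \cite{hkmv16} --- so your proposal has to stand on its own, and its second half does not. Your reduction to maximal subteams is sound as far as it goes: by Proposition \ref{closureunions} the maximal satisfying subteam $M_\psi(Y)$ exists, your recursive characterisations (literals, iterated deletion for inclusion atoms, $M_{\psi_1 \vee \psi_2}(Y) = M_{\psi_1}(Y)\cup M_{\psi_2}(Y)$, and the greatest fixpoint of $Z \mapsto M_{\psi_1}(Z)\cap M_{\psi_2}(Z)$ for conjunction) are all correct, and you rightly observe that evaluating them by naive recursion is exponential. The gap is the step you yourself flag as the crux: the single simultaneous greatest fixpoint, as you specify it, is \emph{unsound}. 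In your update scheme only atoms ever read an input team (disjunctions relay inputs, conjunctions feed their \emph{own} coordinate to their children), so the coordinate of a conjunction node is constrained only to be a fixpoint of its local operator inside $\powerset{X}$; the team its context passes down to it is never consulted. Concretely, take variables $p,b$, the team $X=\{s_1,s_2\}$ with $s_1(p)=1$, $s_1(b)=0$, $s_2(p)=0$, $s_2(b)=1$, and the formula $\varphi = (\kappa \wedge b)\vee \neg b$, where $\kappa = (p \sub b)\wedge(p \sub b)$. Then $X \not\models \varphi$: any split $X=Y\cup Z$ with $Y\models \kappa\wedge b$ and $Z\models\neg b$ forces $Y=\{s_2\}$ and $Z=\{s_1\}$, and $\{s_2\}\not\models p\sub b$. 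Your fixpoint, however, reports satisfaction: the coordinate of $\kappa$ obeys $S_\kappa = M_{p\sub b}(S_\kappa)$, whose greatest solution in $\powerset{X}$ is $X$ itself (since $X\models p\sub b$); the outer conjunction then stabilises at $S_{\kappa\wedge b}= S_\kappa \cap M_b(S_{\kappa\wedge b}) = \{s_2\}$, and the root receives $S_\varphi = \{s_2\}\cup M_{\neg b}(X) = X$. The failure is exactly the non-downward-closure of $\PLInc$: the inflated value $S_\kappa = X$ cannot be cut down soundly afterwards, because $X\cap\{s_2\}$ no longer satisfies $p\sub b$, whereas in your nested semantics $\kappa$ would be evaluated inside the outer conjunction's shrinking team, where its maximal satisfying subteam is $\emptyset$.

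The defect is repairable, but only by changing the update rule, not merely by ``verifying'' it: propagate an inherited input $I_\psi$ through disjunctions as you do, but let a conjunction pass $I_\chi \cap S_\chi$ (inherited input intersected with its own coordinate) to its children, and keep $S_\chi = S_{\chi_1}\cap S_{\chi_2}$. With this cap one proves by induction that at the greatest fixpoint every coordinate lies inside its input, hence the two child coordinates of a conjunction coincide with the conjunction's coordinate; then both directions go through (the nested values $W_\psi$, with $W_{\chi_i}=W_\chi$ at conjunctions and $W_{\chi_i}=M_{\chi_i}(W_\chi)$ at disjunctions, form a post-fixpoint, and at the greatest fixpoint every coordinate satisfies its subformula), and on the counterexample above the repaired iteration correctly returns $\{s_1\}\neq X$. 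As written, though, your membership argument does not establish $\MC(\PLInc)\in\PTIME$. Your hardness direction --- a logspace reduction from the monotone circuit value problem, with inclusion atoms transporting gate values --- is the standard route (and the one taken in \cite{hkmv16}) and is plausible as sketched, though it is only an outline; the membership half is where the proposal currently fails.
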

The following result can also be found in the PhD thesis of M\"uller \cite{muller14}.
\begin{theorem}
$\MC(\PL[\clneg])$ is complete for $\PSPACE$.
\end{theorem}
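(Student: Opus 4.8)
The plan is to establish membership and hardness separately for $\PSPACE$. For the upper bound, I would give a recursive model-checking procedure on the structure of the formula $\varphi \in \PL[\clneg]$. The key observation is that although a team $X$ over $\var{\varphi}$ may have exponentially many assignments in $|\var{\varphi}|$, the input team is given explicitly, so $|X|$ is bounded by the input size and poses no difficulty. The only genuinely expensive operator is the dual negation's disjunction $\vee$ and, dually, the classical negation combined with the splitting semantics: to check $X \models \psi_0 \vee \psi_1$ one must find a partition $X = Y \cup Z$ with $Y \models \psi_0$ and $Z \models \psi_1$. Since $|X|$ is polynomial, there are at most $2^{|X|}$ such partitions, which is exponential in the input but can be enumerated one at a time reusing the same work space. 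The recursion depth is bounded by $|\varphi|$, and at each level we store only a constant number of subteams of $X$ together with a counter ranging over partitions; each such object has polynomial size. Hence the whole computation runs in polynomial space, giving $\MC(\PL[\clneg]) \in \PSPACE$.

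For the lower bound I would reduce from the quantified Boolean formula problem $\TQBF$, which is $\PSPACE$-complete. Given a closed QBF $\Psi = Q_1 x_1 \cdots Q_k x_k\, \chi$ with $\chi$ quantifier-free, the idea is to start from the full team $X = 2^D$ over the variables $D = \{x_1,\dots,x_k\}$ and encode each quantifier alternation by the team-splitting machinery already developed in the proof of Theorem~\ref{main}. Concretely, one simulates an existential quantifier $\exists x_i$ by the pattern $\dep{x_i} \vee (\dep{x_i} \wedge \cdots)$ and a universal quantifier $\forall x_i$ by its dual $\clneg \dep{x_i} \cwedge (\clneg \dep{x_i} \cvee \cdots)$, exactly as in phase (2) of Theorem~\ref{main}; each such step restricts the current team to the assignments sending $x_i$ to a chosen (or, dually, to each) truth value while preserving maximality over the later variables. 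The quantifier-free matrix $\chi$ is then checked on the resulting singleton-like team by the translation replacing each classical disjunction $\vee$ by the intuitionistic disjunction $\cvee$, again mirroring phase (3) of Theorem~\ref{main}. By Proposition~\ref{trans} all the auxiliary connectives $\dep{\cdot}$, $\cvee$, and $\cwedge$ can be rewritten in $\PL[\clneg]$ with only polynomial overhead, so the reduction is polynomial-time and produces a genuine $\PL[\clneg]$-formula.

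The step I expect to require the most care is the lower bound, specifically verifying that the team-splitting simulation of the quantifier prefix behaves correctly when run against the \emph{fixed} starting team $X = 2^D$ rather than being invoked inside the satisfiability argument. In Theorem~\ref{main} the relevant maximality invariants over the propositional variables were maintained as part of a satisfiability construction; here I must check that feeding the concrete team $2^D$ into the same formulae yields acceptance if and only if $\Psi$ is true, i.e.\ that the invariant ``$X$ is maximal over $y_i \ldots y_l$ for all the earlier variables'' is preserved at each quantifier step and bottoms out correctly at the matrix. Once this invariant is confirmed, truth of the QBF corresponds precisely to the alternating choices of subteams succeeding, and combined with the $\PSPACE$ upper bound this establishes $\PSPACE$-completeness.
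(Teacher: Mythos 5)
Your upper bound is essentially correct and matches the paper in spirit: the paper exhibits an alternating polynomial-time procedure (Algorithm~\ref{algmc}) and invokes $\AEPoly=\PSPACE$, while you determinize the same recursion directly, reusing space across the enumeration of partitions $Y\cup Z=X$; since the team is given explicitly and the recursion depth is bounded by $|\varphi|$, polynomial space suffices.

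The lower bound, however, has a genuine gap: your reduction is not polynomial-time. In the model-checking problem the team is part of the \emph{input}, and you propose to start from the full team $X=2^D$ over $D=\{x_1,\dots,x_k\}$, which has $2^k$ assignments, so merely writing down the instance takes time exponential in the size of the QBF. The defect is not cosmetic: the invariant your simulation needs (the current team is maximal over the not-yet-quantified variables) forces the team to contain every valuation of those variables, so \emph{any} team supporting phase (2) of Theorem~\ref{main} must be exponentially large. That construction lives naturally in a \emph{satisfiability} proof, where an exponential-size team can be existentially guessed by an alternating exponential-time machine; it cannot be transplanted into a polynomial-time reduction for $\MC$. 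The paper's reduction avoids this with a different encoding altogether: the input team $T=\{s_1,\dots,s_n\}$ has just one assignment per QBF variable, with $s_i$ writing the index $i$ in binary over $\log(n)+1$ proposition symbols; the truth value of $x_i$ is represented by the \emph{presence or absence of $s_i$ in the current team}, an existential quantifier is simulated by splitting off the subteam satisfying $\tuple r = \cor{i}$ (and a universal one by the dual pattern with $\cwedge$), and the literals of the matrix are translated into membership tests $\clneg\neg(\tuple r = \cor{i})$ and $\neg(\tuple r = \cor{i})$. You would need to switch to some such polynomial-size encoding for your hardness argument to go through.
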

\begin{proof}
For the upper bound note that Algorithm \ref{algmc} decides the problem in $\AEPoly$ which is exactly $\PSPACE$ \cite{chandra81}. For the lower bound, we reduce from $\TQBF$ which is known to be $\PSPACE$-complete. Let $Q_1 x_1 \ldots Q_n x_n \theta$ be a quantified boolean formula. Let $\tuple r$ be a sequence of propositional symbols of length $\log (n)+1$, and let $T:= \{s_1, \ldots ,s_n\}$ be a team  where $s_i(\tuple r)$ writes $i$ in binary. 
   We define inductively  a formula $\varphi \in \PL[\clneg]$ such that 
\begin{equation}\label{plaa}
Q_1 x_1 \ldots Q_n x_n \theta \textrm{ is true iff }T \models \varphi.
\end{equation}
Let $\varphi:= \varphi_1$, and for $1 \leq i \leq n$, depending on whether $x_i$ is existentially or universally quantified we let
\begin{itemize}
\item[$\exists$:] $\varphi_{i}:= \tuple r = \cor{i} \vee  \varphi_{i+1}$,
\item[$\forall$:] $\varphi_{i}:= \hspace{.5mm}\clneg \tuple r = \cor{i} \cwedge  \varphi_{i+1}$.
\end{itemize}
Finally, we let $\varphi_{n+1}$ denote the formula obtained from $\theta$ by first substituting each $\neg x_i$ by  $\neg \tuple r = \cor{i}$ and then  $x_i$ by $\clneg \neg \tuple r = \cor{i}$, for each $i$. Note that the meaning $\neg \tuple r = \cor{i}$ is that the assignment $s_i$ is not in the team, whereas $\clneg \neg \tuple r = \cor{i}$ states that $s_i$ is in the team. It is now straightforward to establish that \eqref{plaa} holds. Also $T$ and $\varphi$ can be constructed in polynomial time, and hence we obtain the result. \qed
\end{proof}
Since Algorithm \ref{algmc} can also be applied to independence and inclusion atoms, we obtain the following corollary.
\begin{corollary}
$\MC(\PL[\bot_{\rm c}, \sub, \clneg])$ and $\MC(\PL[\mathcal{C}, \clneg])$, where $\mathcal{C}$ is a finite collection of polynomial time computable dependency atoms, are complete for $\PSPACE$.
\end{corollary}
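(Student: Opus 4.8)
The plan is to obtain hardness by sublogic inclusion and containment by reusing the alternating model-checking algorithm, the key observation being that in model checking the team is given explicitly rather than guessed.

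For $\PSPACE$-hardness I would note that $\PL[\clneg]$ is a sublogic of both $\PL[\bot_{\rm c}, \sub, \clneg]$ and $\PL[\mathcal{C}, \clneg]$ (in the latter case regardless of $\mathcal{C}$, since every $\PL[\clneg]$-formula already lies in the extension). The immediately preceding theorem establishes that $\MC(\PL[\clneg])$ is $\PSPACE$-hard via the reduction from $\TQBF$, so the very same instances witness $\PSPACE$-hardness for both extended logics; no new construction is needed.

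For containment in $\PSPACE$ I would invoke Algorithm \ref{algmc}. The crucial difference from the satisfiability setting of Theorem \ref{hard} is that here the team $X$ is part of the input and hence of size polynomial in the input length, so the exponential-size existential guess of a team that was performed in Theorem \ref{hard} is no longer required: one simply runs $\Call{mc}{X,\varphi,1}$ on the given team. The algorithm recurses along the structure of $\varphi$, using alternation to handle the splitting and classical-negation clauses, so its number of alternations is bounded by the depth of $\varphi$ and is therefore polynomial, while each recursive step manipulates subteams of $X$, all of polynomial size. This places the procedure in alternating polynomial time $\AEPoly$, which coincides with $\PSPACE$ by \cite{chandra81}.

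The only point requiring care is the treatment of the atomic cases. When the recursion reaches an independence atom, an inclusion atom, or one of the atoms of $\mathcal{C}$, the algorithm must decide $X' \models A(\tuple p)$ for the current subteam $X'$. By assumption every such atom has polynomial-time checkable semantics, so this test runs in time polynomial in the combined size of $X'$ and $\tuple p$, hence polynomial in the input. Consequently the atomic steps do not push the procedure beyond alternating polynomial time, and the bound $\AEPoly = \PSPACE$ is preserved. I do not expect a genuine obstacle: the argument amounts to observing that the $\AEP$ upper bound of Theorem \ref{hard} collapses to $\PSPACE$ once the exponential team guess is dropped, combined with the polynomial-time checkability of all atoms in play.
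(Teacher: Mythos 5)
Your proposal is correct and follows essentially the same route as the paper: $\PSPACE$-hardness is inherited from $\MC(\PL[\clneg])$ by sublogic inclusion, and the upper bound comes from running Algorithm~\ref{algmc} on the explicitly given team, noting that the atomic cases (inclusion, independence, or any atom of $\mathcal{C}$) are decidable in polynomial time, so the procedure stays in $\AEPoly=\PSPACE$. The paper states this corollary with exactly this justification, so no further comment is needed.
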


\section{Conclusion}
In this article we have initiated a systematic study of the complexity theoretic properties of  team based propositional logics. Regarding the logics considered in this paper, an interesting open question is to determine the exact complexity of  $\VAL(\PLI)$ for which membership in 
$\coNEXP^{\NP}$ was shown in this paper. Propositional team semantics is a very rich framework in which many interesting connectives and operators  can be studied such as the intuitionistic implication $\intimp$ applied in the area of inquisitive semantics. It is an interesting question to extend this study to cover a more wide range of team based logics.



\bibliographystyle{plain}
\bibliography{biblio}

\begin{thebibliography}{10}

\bibitem{abramsky09}
Samson Abramsky and Jouko V\"a\"an\"anen.
\newblock From {IF} to {BI}.
\newblock {\em Synthese}, 167:207--230, 2009.
\newblock 10.1007/s11229-008-9415-6.

\bibitem{Buss1987}
Sam Buss.
\newblock The boolean formula value problem is in {ALOGTIME}.
\newblock In {\em Proceedings of the Nineteenth Annual ACM Symposium on Theory
  of Computing}, STOC '87, pages 123--131, New York, NY, USA, 1987. ACM.

\bibitem{chandra81}
Ashok~K. Chandra, Dexter~C. Kozen, and Larry~J. Stockmeyer.
\newblock Alternation.
\newblock {\em J. ACM}, 28(1):114--133, January 1981.

\bibitem{co71}
Stephen~A. Cook.
\newblock The complexity of theorem-proving procedures.
\newblock In {\em Proceedings of the third annual ACM symposium on Theory of
  computing}, STOC '71, pages 151--158, New York, NY, USA, 1971. ACM.

\bibitem{Durand2016}
Arnaud Durand, Juha Kontinen, and Heribert Vollmer.
\newblock Expressivity and complexity of dependence logic.
\newblock In Samson Abramsky, Juha Kontinen, Jouko V{\"a}{\"a}n{\"a}nen, and
  Heribert Vollmer, editors, {\em Dependence Logic: Theory and Applications},
  pages 5--32. Springer International Publishing, 2016.

\bibitem{ebbing12}
Johannes Ebbing and Peter Lohmann.
\newblock Complexity of model checking for modal dependence logic.
\newblock In M{\'a}ria Bielikov{\'a}, Gerhard Friedrich, Georg Gottlob, Stefan
  Katzenbeisser, and Gy{\"o}rgy Tur{\'a}n, editors, {\em SOFSEM 2012: Theory
  and Practice of Computer Science}, volume 7147 of {\em Lecture Notes in
  Computer Science}, pages 226--237. Springer Berlin / Heidelberg, 2012.

\bibitem{galliani12}
Pietro Galliani.
\newblock Inclusion and exclusion dependencies in team semantics: On some
  logics of imperfect information.
\newblock {\em Annals of Pure and Applied Logic}, 163(1):68 -- 84, 2012.

\bibitem{DBLP:journals/sLogica/GradelV13}
Erich Gr{\"a}del and Jouko V{\"a}{\"a}n{\"a}nen.
\newblock Dependence and independence.
\newblock {\em Studia Logica}, 101(2):399--410, 2013.

\bibitem{DBLP:conf/foiks/HannulaK14}
Miika Hannula and Juha Kontinen.
\newblock A finite axiomatization of conditional independence and inclusion
  dependencies.
\newblock {\em Inf. Comput.}, 249:121--137, 2016.

\bibitem{hakoluvi16}
Miika Hannula, Juha Kontinen, Martin L\"uck, and Jonni Virtema.
\newblock On quantified propositional logics and the exponential time
  hierarchy.
\newblock In Domenico Cantone and Giorgio Delzanno, editors, {\em {\rm
  Proceedings of the Seventh International Symposium on} Games, Automata,
  Logics and Formal Verification, {\rm Catania, Italy, 14-16 September 2016}},
  volume 226 of {\em Electronic Proceedings in Theoretical Computer Science},
  pages 198--212. Open Publishing Association, 2016.

\bibitem{hkmv16}
Lauri Hella, Antti Kuusisto, Arne Meier, and Jonni Virtema.
\newblock Model checking and validity in propositional and modal inclusion
  logics.
\newblock {\em CoRR}, abs/1609.06951, 2016.

\bibitem{hellakmv15}
Lauri Hella, Antti Kuusisto, Arne Meier, and Heribert Vollmer.
\newblock Modal inclusion logic: Being lax is simpler than being strict.
\newblock In Giuseppe~F. Italiano, Giovanni Pighizzini, and Donald Sannella,
  editors, {\em MFCS (1)}, volume 9234 of {\em Lecture Notes in Computer
  Science}, pages 281--292. Springer, 2015.

\bibitem{DBLP:journals/corr/KontinenMSV14a}
Juha Kontinen, Julian-Steffen M{\"u}ller, Henning Schnoor, and Heribert
  Vollmer.
\newblock {A Van Benthem Theorem for Modal Team Semantics}.
\newblock In {\em 24th EACSL Annual Conference on Computer Science Logic (CSL
  2015)}, pages 277--291, 2015.

\bibitem{DBLP:journals/fuin/KontinenN11}
Juha Kontinen and Ville Nurmi.
\newblock Team logic and second-order logic.
\newblock {\em Fundam. Inform.}, 106(2-4):259--272, 2011.

\bibitem{Lev1973}
Leonid~A. Levin.
\newblock Universal search problems.
\newblock {\em Problems of Information Transmission}, 9(3), 1973.

\bibitem{lohvo13}
Peter Lohmann and Heribert Vollmer.
\newblock Complexity results for modal dependence logic.
\newblock {\em Studia Logica}, 101(2):343--366, 2013.

\bibitem{Luck16a}
Martin L{\"u}ck.
\newblock {Axiomatizations for Propositional and Modal Team Logic}.
\newblock In Jean-Marc Talbot and Laurent Regnier, editors, {\em 25th EACSL
  Annual Conference on Computer Science Logic (CSL 2016)}, volume~62 of {\em
  Leibniz International Proceedings in Informatics (LIPIcs)}, pages
  33:1--33:18, Dagstuhl, Germany, 2016. Schloss Dagstuhl--Leibniz-Zentrum fuer
  Informatik.

\bibitem{Luck16b}
Martin L{\"{u}}ck.
\newblock Complete problems of propositional logic for the exponential
  hierarchy.
\newblock {\em CoRR}, abs/1602.03050, 2016.

\bibitem{muller14}
Julian-Steffen M\"uller.
\newblock Satisfiability and model checking in team based logics.
\newblock {\em PhD Thesis, Leibniz Universit\"at Hannover, Cuvillier Verlag
  G\"ottingen}, 2014.

\bibitem{orponen83}
Pekka Orponen.
\newblock Complexity classes of alternating machines with oracles.
\newblock In {\em Automata, Languages and Programming, 10th Colloquium,
  Barcelona, Spain, July 18-22, 1983, Proceedings}, pages 573--584, 1983.

\bibitem{DBLP:journals/corr/SanoV14}
Katsuhiko Sano and Jonni Virtema.
\newblock {Axiomatizing Propositional Dependence Logics}.
\newblock In Stephan Kreutzer, editor, {\em 24th EACSL Annual Conference on
  Computer Science Logic (CSL 2015)}, volume~41 of {\em Leibniz International
  Proceedings in Informatics (LIPIcs)}, pages 292--307, Dagstuhl, Germany,
  2015. Schloss Dagstuhl--Leibniz-Zentrum fuer Informatik.

\bibitem{vaananen07}
Jouko V\"a\"an\"anen.
\newblock {\em Dependence Logic}.
\newblock Cambridge University Press, 2007.

\bibitem{DBLP:journals/corr/Virtema14}
Jonni Virtema.
\newblock Complexity of validity for propositional dependence logics.
\newblock In Adriano Peron and Carla Piazza, editors, {\em Proceedings Fifth
  International Symposium on Games, Automata, Logics and Formal Verification,
  GandALF 2014, Verona, Italy, September 10-12, 2014.}, volume 161 of {\em
  {EPTCS}}, pages 18--31, 2014.

\bibitem{Yangthesis}
Fan Yang.
\newblock {\em On Extensions and Variants of Dependence Logic}.
\newblock PhD thesis, University of Helsinki, 2014.

\bibitem{DBLP:journals/corr/YangV14}
Fan Yang and Jouko V{\"{a}}{\"{a}}n{\"{a}}nen.
\newblock Propositional logics of dependence.
\newblock {\em Ann. Pure Appl. Logic}, 167(7):557--589, 2016.

\end{thebibliography}

\newpage
\pagestyle{empty}
\appendix
\setcounter{section}{1}
\begin{algorithm}
			\caption{$\AEPoly$ algorithm for $\MC(\PL[\bot_{\rm c}, \sub,\clneg])$}
			\label{algmc}
		\begin{algorithmic}[1]
			\Function{mc}{$T, \varphi, I$} 
			\If{$\varphi = \psi_1 \wedge \psi_2$}
				\If{$I=1$}
					\State \textbf{universally choose} $i \in \{1,2\}$
					\State \textbf{return} \Call{mc}{$T,\psi_i,I$}
				\ElsIf{$I=0$}
					\State \textbf{existentially choose} $i \in \{1,2\}$
					\State \textbf{return} \Call{mc}{$T,\psi_i,I$}
				\EndIf

			\ElsIf{$\varphi = \psi_1 \vee \psi_2$}
				\If{$I=1$}
					\State \textbf{existentially choose} $T_1\cup T_2=T$
					\State \textbf{universally choose} $i\in \{1,2\}$
					\State \textbf{return} \Call{mc}{$T_i,\psi_i,I$}
				\ElsIf{$I=0$}
					\State \textbf{universally choose} $T_1\cup T_2=T$
					\State \textbf{existentially choose} $i\in \{1,2\}$
					\State \textbf{return} \Call{mc}{$T_i,\psi_i,I$} 
				\EndIf

			\ElsIf{$\varphi = \hspace{.5mm}\clneg \psi$}
				\If{$I=1$}
					\State \textbf{return} \Call{mc}{$T,\psi,0$}
				\ElsIf{$I=0$}
					\State \textbf{return} \Call{mc}{$T,\psi,1$}
				\EndIf

			\ElsIf{$\varphi = p$ ($\varphi = \neg p$)}
				\State $1 \leftarrow x$
				\For{$s \in T$}
					\If{$s(p)=0$ ($s(p)=1$)}
						\State  $0 \leftarrow x$	
					\EndIf
				\EndFor	
				\If{$x=I=1$ \textbf{or} $x=I=0$}
					\State \textbf{return} \textbf{true}
				\Else
					\State \textbf{return} \textbf{false}
				\EndIf


			\ElsIf{$\varphi = \tuple p \sub \tuple q$}
				\State $1 \leftarrow x$
				\For{$s \in T$}
						\State $0\leftarrow y$
						\For{$s' \in T$}
							\If{$s(\tuple p) = s'(\tuple q)$}
								\State $1\leftarrow y$
							\EndIf
						\EndFor
						\If{$y=0$}
							\State $0 \leftarrow x$
						\EndIf 
				\EndFor
				\If{$x=I=1$ \textbf{or} $x=I=0$}
				\State \textbf{return} \textbf{true}
				\Else
				\State \textbf{return} \textbf{false}
				\EndIf

			\ElsIf{$\varphi = \indep{\tuple p}{\tuple q}{\tuple r}$}
				\State $1 \leftarrow x$
				\For{$s,s' \in T \textbf{ with } s(\tuple p)=s'(\tuple p)$}
					\State $0\leftarrow y$
					\For{$s'' \in T$}
							\If{$s(\tuple p) = s''(\tuple p),s(\tuple q) = s''(\tuple q), s'(\tuple r)=s''(\tuple r)$}
								\State $1\leftarrow y$
							\EndIf
						\EndFor
						\If{$y=0$}
							\State $0 \leftarrow x$
						\EndIf 
				\EndFor
				\If{$x=I=1$ \textbf{or} $x=I=0$}
				\State \textbf{return} \textbf{true}
				\Else
				\State \textbf{return} \textbf{false}
				\EndIf

%
%
%

			\EndIf
			\EndFunction
		\end{algorithmic}
	\end{algorithm}
\end{document}